\begin{document}

\title{From Vicious Walkers to TASEP.}
\author{T.C. Dorlas$^{1}$, A.M. Povolotsky$^{1,2,}\thanks{%
Corresponding author e-mail address: alexander.povolotsky@gmail.com}$, V.B.
Priezzhev$^{2}$ \\
\\
$^{1}${\small {Dublin Institute for Advanced Studies,10 Burlington rd,
Dublin 4,Ireland } }\\
$^{2}${\small {Bogoliubov Laboratory for Theoretical Physics, Joint
Institute for } }\\
{\small {\ Nuclear Research,141980, Dubna, Russia} }}
\maketitle

\begin{abstract}
We propose a model of semi-vicious walkers, which interpolates between the
totally asymmetric simple exclusion process and the vicious walkers model,
having the two as limiting cases. For this model we calculate the
asymptotics of the survival probability for $m$ particles and obtain a
scaling function, which describes the transition from one limiting case to
another. Then, we use a fluctuation-dissipation relation allowing us to
reinterpret the result as the particle current generating function in the
totally asymmetric simple exclusion process. Thus we obtain the particle
current distribution asymptotically in the large time limit as the number of
particles is fixed. The results apply to the large deviation scale as well
as to the diffusive scale. In the latter we obtain a new universal
distribution, which has a skew non-Gaussian form. For $m$ particles its
asymptotic behavior \ is shown to be $e^{-\frac{y^{2}}{2m^{2}}}$ as $%
y\rightarrow -\infty $ and $e^{-\frac{y^{2}}{2m}}y^{-\frac{m(m-1)}{2}}$ as $%
y\rightarrow \infty $.
\end{abstract}


%


\section{Introduction}

Exact solutions of 1-dimensional (1D) many particle stochastic models \cite%
{Schutz review} have given much insight into the physics of non-equilibrium
systems in one dimension \cite{Spohn}. They serve as a testing ground for
the macroscopic theories, being able to verify their predictions \cite%
{Bertini}. Examples are the description of different kinds of
non-equilibrium phase transitions\cite{Evans}, calculation of the large
deviation functions for the density profile and total particle current\cite%
{Derrida}, verification of the fluctuation dissipation relations \cite%
{Lebowitz Spohn} and testing of the range of their validity \cite{Harris
Schutz}.

The range of models is very broad. In the context of the present article we
mention two of them. The first one is the lock step model of vicious walkers
(VW) that has been introduced in the physical literature by M. Fisher \cite%
{Fisher} to describe the wetting and melting phenomena. This is a random
process defined as many \textit{non-interacting} particles performing random
walks on a 1D lattice, whose space-time trajectories are forbidden to cross
each other. The term non-interacting means that the probability of a
particular realization of the process, which meets the latter constraint, is
\ given by the product of the probabilities of the random walks performed by
each individual walker. The other realizations, where crossings occur, are
assigned zero statistical weight. Such an elimination of a fraction of
possible outcomes at every time step violates the probability conservation.
A measure of the probability dissipation is the sum of the probabilities of
all possible particle configurations at a given time, referred to as the
survival probability. Its leading asymptotics for $m$ particles has been
shown by M. Fisher to decay with time $t$ as a power law: $t^{-\frac{m\left(
m-1\right) }{4}}$.

Another model, the totally asymmetric simple exclusion process (TASEP) \cite%
{Ligget}, has been widely discussed in connection to the Kardar-Parisi-Zhang
universality class \cite{KPZ}. In contrast to the VW model, this is a model
of \textit{interacting} random walks. The interaction prevents particles
from jumping to occupied sites. Therefore, similarly to vicious walkers, the
statistical ensemble includes only those events in which the space-time
trajectories of particles do not cross. The difference is that there is an
interaction that changes the statistical weights of particle trajectories
when they pass via neighboring sites so that the total probability is
conserved. In this case the quantity corresponding to the survival
probability is just a probability normalization constant.

Thus, the probability lost after imposing the global non-crossing constraint
on the dynamics of non-interacting particles in VW is regained in the TASEP
by adding the lacking probability locally at certain steps. In the present
paper we consider the two models as limiting cases of a more general
interaction, where the added probability is a varying parameter of the model
that controls the probability dissipation, such that the probability
conservation is restored when it is tuned to the TASEP value. In this
connection a natural question arises: what happens with Fisher's asymptotics
for the survival probability under such a generalization and, in particular,
how does it cross over to the TASEP normalization constant. This is the
first question we address in this paper.

Specifically, we propose a semi-vicious walkers (SVW) model, which
interpolates between VW and TASEP. It is a model of interacting particles
with partial repulsion or attraction, where trajectory crossings are
forbidden. The term partial repulsion (attraction) means that the
probability for the particle to jump to an occupied site is not equal to
zero like in TASEP but can be less (greater) than that of a free particle.
At the same time, the non-crossing constraint leads to lack of probability
conservation in the same way as in the VW model. The strength of the
interaction, which also characterizes the probability dissipation, is a
parameter of the model, which has the TASEP and VW as limiting cases at the
endpoints of its range.

In this article we obtain the large-time asymptotics of the survival
probability. Its limiting case corresponding to VW is given by the above
mentioned result of Fisher, which yields the leading power law asymptotics.
Later it was reproduced with more mathematical rigor together with the
constant prefactor that was obtained for the particular initial
configurations, where the particles are separated by equal spaces \cite%
{Krattenhaler Guttman Viennot}. In the case of a general initial
configuration of walkers this prefactor depends on the initial positions.
This case has been studied in \cite{Rubey}.

Our results can be roughly divided into two parts. For generic values of the
interaction strength, away from the point corresponding to the TASEP, the
probability dissipation is finite. It is intuitively clear that the main
asymptotics must be similar to the VW one. Indeed, we obtain the Fisher's
power law with a constant prefactor that depends on the initial positions of
the particles and on the interaction strength. It is shown to diverge in the
TASEP limit. The second and probably the most interesting case is the
transition region, which interpolates between the VW and TASEP behavior. To
probe into this region, we consider a scaling limit of the survival
probability, where the large time limit and the TASEP limit of the
interaction strength are combined. In this way we obtain a scaling function
of a single parameter that controls the transition from VW to TASEP.

The second problem we address is the distribution of the integral particle
current in TASEP. A first example of such an exact distribution has been
obtained by Derrida and Lebowitz \cite{Derrida Lebowitz}, who found the
large deviation function for the particle current in the TASEP confined to a
ring. A specific property of the finite system is that there is a finite
relaxation time, after which the system settles into a non-equilibrium
stationary state, independent of initial conditions \cite{Gwa Spohn}. Then,
the tool used to study integral current fluctuations is, roughly speaking,
an analysis of the relaxation of the system subject to a perturbation into
the stationary state. Technically it is an analysis of the largest
eigenvalue of the perturbed Markov matrix governing the process.

In genuinely infinite systems the situation is more peculiar. In this case
there is no characteristic relaxation time scale. When starting away from
the stationary state, the latter is never approached. In this case one needs
to consider actual time evolution of quantities of interest, which depend on
initial conditions. A major breakthrough in this direction has been achieved
by Johansson, \cite{Johansson}. He considered the TASEP evolution of an
infinite cluster of particles, which initially occupies all sites of the
lattice to the left of a fixed site, and calculated the distribution of the
number of steps made by an arbitrary particle in this cluster. Johansson's
solution has initiated a burst of activity in the field, which exploited
deep connections of the TASEP to the theory of random matrix ensembles and
the determinantal point processes. Results have been obtained for different
initial conditions and extended to many particle joint distributions \cite%
{Sasamoto Nagao},\cite{Rakos Schuetz},\cite{Sasamoto},\cite{Ferrari Spohn},%
\cite{Borodin Ferrari},. Remarkably, in the scaling limit these results
provide parameter free universal distributions \cite{Prahofer Spohn} of the
fluctuations measured in the KPZ characteristic scale, which is of order of $%
t^{1/3}$ as time $t$ grows to infinity \cite{Krug Spohn}. This is in
contrast to the diffusive scale $t^{1/2}$, which, according to the Central
Limit Theorem (CLT), characterizes the fluctuations of the distance
travelled by a free particle \cite{Feller}. The large deviation limit of the
single particle current distribution has been studied in \cite{Harris Rakos
Schuetz} in connection with the fluctuation dissipation relations.

Despite the great success in finding the distributions of single particle
currents and their correlation functions, very few results on the integral
particle current, i.e. on the distance travelled by all particles, are
available for driven diffusive systems. In fact the only known exact result
is the above mentioned large deviation function for the integral particle
current for the TASEP in a ring \cite{Derrida Lebowitz} and its
generalization for the partially asymmetric case \cite{Kim}, \cite{Lee Kim}.
No results beyond the large deviation scale, neither a generalization for an
infinite system has been proposed. On the other hand, extensive quantities
like the integral current, are important ingredients of the thermodynamics
of the models. A knowledge of the character of their fluctuations could be
of help for extension of the thermodynamical formalism to irreversible
systems. The present paper makes a step in this direction. The problem we
solve here is as follows. We study the large time asymptotics of the
distribution of the total number of jumps made by a finite number of TASEP
particles in an infinite lattice, given an arbitrary initial configuration.
The idea that allows us to consider this problem in line with the previous
one is the existence of a kind of fluctuation-dissipation relation that
unifies the dissipation of probability in SVW and the statistics of
fluctuations of the integrated particle current in TASEP. Specifically, an
auxiliary parameter, which violates the probability conservation, can be
introduced into the evolution operator in TASEP to account for the total
number of steps made by particles, see e.g. \cite{Derrida Lebowitz}. This
parameter plays a role similar to the one played by the interaction in SVW,
the two problems being equivalent after a certain change of variables. Using
this fact, we interpret the result obtained for the survival probability in
SVW as a generating function of the particle current in TASEP. The latter,
in its turn, can be used to reconstruct the form of the current distribution.

Like those for SVW, the results obtained for the TASEP particle current
consist of two parts. The generic values of the interaction strength
correspond to the distribution of the particle current at the large
deviation scale, i.e. describes the deviations of order of time $t$. It
turns out that it has a skew distribution with asymmetric negative and
positive tails. These tails are connected by a middle part corresponding to
the transition region. The latter yields the current distribution at the
diffusive scale, $t^{1/2}$, which is shown to have a skew non-Gaussian form,
depending only on the total number of particles, and we suggest to be
universal for particles performing a driven diffusion.

One technical remark has to be made about the connection of our solution to
the theory of random matrix ensembles. It is this connection which \ enabled
the above mentioned progress in calculating the single particle current
distributions and their many particle generalizations. In our solution this
connection has also been exploited. Namely, the survival probability in the
SVW model at generic values of the interaction strength and exactly at the
TASEP point can be calculated in terms the Mehta integrals $I_{m,k}$ with $%
k=1/2$ and $k=1$, which appear as normalization factors in the orthogonal
and unitary Gaussian ensembles of random matrices respectively \cite{Mehta
Dyson}. Note, however, that the scaling function obtained in the transition
region for the system of $m$ particles can be reduced to neither of these
integrals except of at three limiting points, where it becomes $I_{m,1/2}$,$%
I_{m,1}$ and $I_{m-1,1}$ respectively. Thus, we obtain a generalized object,
which interpolates between these three Mehta integrals, and, therefore, in a
sense unifies three different matrix ensembles. To our knowledge no such
generalization has appeared in the theory before.

The article is organized as follows. In Section 2 we formulate the SVW
model, state the results obtained and discuss their interpretation in terms
of the probability distribution of the particle current in TASEP. Sections
3-5 are a technical part, where we prove the results outlined in Section 2.
In Section 3 we solve the master equation for the SVW model. In Section 4 we
obtain the asymptotic formulas for the transition probabilities. In Section
5 we prove the limiting properties of the function characterizing the SVW to
TASEP transition. Section 6 has a summary and conclusions.

\section{Model and results}

\subsection{Semi vicious walkers model}

Consider $m$ particles on a 1D infinite lattice. A configuration $X$ of the
system is specified by an $m$-tuple of strictly increasing integers%
\begin{equation}
X=\left\{ x_{1}<x_{2}<\cdots <x_{m}\right\} ,
\end{equation}%
where $x_{i}$ is the coordinate of $i$-th particle. The strictly increasing
order reflects the exclusion condition, i.e. two particles cannot occupy the
same site. We say that the relation \thinspace $X\leq Y$ \ holds for
particle configurations if%
\begin{equation}
x_{1}\leq y_{1}\leq x_{2}\leq \cdots \leq x_{m}\leq y_{m}.
\end{equation}%
The SVW model is a random process, which is defined on a set of sequences of
configurations $X^{0},X^{1},\cdots ,X^{t},$ such that
\begin{equation}
X^{0}\leq X^{1}\cdots \leq X^{t}.
\end{equation}%
We refer to such a sequence as a trajectory of the system traveled up to
time $t$. Every such trajectory is realized with probability
\begin{equation}
P(X^{0},\ldots ,X^{t})=T(X^{t},X^{t-1})\cdots
T(X^{2},X^{1})T(X^{1},X^{0})P_{0}(X^{0}).
\end{equation}%
$P_{0}(X)$ is the initial probability of the configuration $X$ and the
transition probability $T(X,Y),$ from the configuration $Y$ to $X,$ is
defined as follows%
\begin{equation}
T(X,Y)=\vartheta (x_{m}-y_{m})\prod\limits_{k=1}^{m-1}\theta \left(
x_{i}-y_{i},x_{i+1}-y_{i}\right) ,  \label{weights}
\end{equation}%
where
\begin{eqnarray}
\vartheta (k) &=&\left( 1-p\right) \delta _{k,0}+p\delta _{k,1},
\label{weights_11} \\
\theta \left( k,l\right) &=&\left( 1-p\left( 1-\kappa \delta _{l,1}\right)
\right) \delta _{k,0}+p\delta _{k,1},  \label{weights_12}
\end{eqnarray}%
and%
\begin{eqnarray}
0 &<&p<1,  \label{p range} \\
1-1/p &\leq &\kappa \leq 1.  \label{kappa range}
\end{eqnarray}%
This means that at each discrete time step a particle can jump forward with
probability $p$ or stay put with probability $1-p$, provided that the next
site is empty. If the next site is occupied, the probability for a particle
to stay put is $\left( 1-p(1-\kappa )\right) $. The probability deficit $%
p(1-\kappa )$, corresponds to the process when the particle jumps to the
adjoining occupied site, which is forbidden. This excluded process results
in probability dissipation in this model. The form of the transition
probabilities corresponds to the backward sequential update, i.e. the
particles are updated starting from the $m$-th particles one by one in
backward direction. In particular limiting cases the model reduces to

\begin{enumerate}
\item $\kappa =0$ \ - \ VW, a particle jumps forward with probability $p$ or
stays with probability $\left( 1-p\right) $, irrespective of whether the
next site is occupied or not. But then those realizations of the process
where two particles come to the same site must be removed from the
statistical ensemble.

\item $\kappa =1$ \ - \ TASEP, a particle jumps forward with probability $p$
or stays with probability $\left( 1-p\right) $ provided the adjoining site
is empty. When the next site is occupied the particle stays where it is with
probability $1$.
\end{enumerate}

The TASEP with the backward sequential update was studied in \cite{Brankov
Priezzhev Shelest} and \cite{Rakos Schuetz}, where it was referred to as a
fragmentation model. In the case $\kappa =\left( 1-1/p\right) $, the
probability for a particle to stay where it is when the next site occupied,
is zero. Therefore, the trajectories of particles passing via neighboring
sites have zero weight, i.e. they are removed from the ensemble as well as
those which meet at the same site. Therefore this situation resembles the
vicious walks of dimers. The range of $\kappa $ given in (\ref{kappa range})
is due to the requirement for $(1-p(1-\kappa ))$ to be a probability.
Positive values of $\kappa $ correspond to repulsive interaction, while
negative values correspond to an attractive interaction. The domain $\kappa
>1$ is also of interest in connection with the current fluctuation in TASEP,
though it does not have a probabilistic meaning in the context of SVW.

\subsection{The results about the SVW model}

Let us consider the quantity%
\begin{equation}
\mathcal{P}_{t}\left( X^{0}\right) =\sum\limits_{X^{0}\leq X^{1}\cdots \leq
X^{t}}P(X^{0},\ldots ,X^{t}),
\end{equation}%
where the sum is over all the trajectories of the system starting at the
configuration $X^{0}$, i.e. $P_{0}\left( X\right) =\delta _{X,X^{0}}$. This
quantity is the partition function of the statistical ensemble of the
trajectories with the statistical weights defined above. On the other hand,
if we add the lacking processes allowing the particles to jump to an
occupied site, the value of $\mathcal{P}_{t}\left( X^{0}\right) $ will have
the meaning of probability for all the particles not to meet up to time $t$.
In Fisher's original formulation of such a process, two particles annihilate
when getting to the same site. Then, $\mathcal{P}_{t}\left( X^{0}\right) $
is the probability for $m$ particles to survive up to time $t$. Therefore,
we refer to this quantity as a survival probability. Below we formulate
three theorems, which specify the asymptotic behaviour of the survival
probability in the limit of large time for different parts of the range of
the parameter $\kappa $. The proof of these theorems is the content of
Sections 4,5.

\begin{remark}
\label{complex kappa remark} Two of the theorems below are stated and proved
for complex valued parameter $\kappa $ and the third one for real $\kappa >1$%
. Obviously, the quantity obtained there has a meaning of the survival
probability only for real $\kappa $ varying in the range (\ref{kappa range}%
). Consideration of other values of $\kappa $ is justified by its later
interpretation in terms of the generating function of the moments of the
total particle current in the TASEP. In the latter case the complex values
of $\kappa $ turn out to be meaningful and useful to reconstruct the total
particle current distribution in the TASEP.
\end{remark}

\subsubsection{The survival probability for SVW}

\paragraph{\ Generic case, $\left\vert \protect\kappa \right\vert <1$}

In this case the asymptotic behavior of the survival probability $\mathcal{P}%
_{t}\left( X^{0}\right) $ as $t\rightarrow \infty $ is given by the
following theorem.

\begin{theorem}
\label{kappa<1 Theorem}Let $\kappa \in \mathbb{C}$ be a fixed complex number
from the domain $\left\vert \kappa \right\vert <1$, and let $\left\vert
x_{i}^{0}-x_{j}^{0}\right\vert <\infty $ for any $i,j=1,\ldots ,m$. Then, as
$t\rightarrow \infty $ the survival probability $\mathcal{P}_{t}\left(
X^{0}\right) $ for $m$ particles is
\begin{equation}
\mathcal{P}_{t}\left( X^{0}\right) =A\left( \kappa ;X^{0}\right) \left[
tp(1-p)\right] ^{-\frac{m\left( m-1\right) }{4}}\left[ 1+O\left( \left( \log
t\right) ^{3}t^{-1/2}\right) \right] ,  \label{P_t(X^0) kappa<1}
\end{equation}%
where the prefactor is given by%
\begin{equation}
A(\kappa ;X^{0})=\frac{2^{m}\prod_{l=1}^{m}\Gamma \left( l/2+1\right) }{%
\left( 1-\kappa \right) ^{\frac{m\left( m-1\right) }{2}}\pi ^{m/2}}\det %
\left[ g_{i,j}(x_{m}^{0}-x_{i}^{0};\kappa )\right] _{_{1\leq i,j\leq m}}
\label{A(X^0,kappa)}
\end{equation}%
and where the function $g_{i,j}(x;\kappa )$ is defined by
\begin{equation}
g_{i,j}(x;\kappa )=\oint_{C_{0}}\frac{d\xi }{2\pi \mathrm{i}}\frac{\left(
\kappa +\kappa \xi -1\right) ^{i-1}\left( 1+\xi \right) ^{x}}{\xi ^{j}}.
\label{g_i,j}
\end{equation}
\end{theorem}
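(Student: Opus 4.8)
The plan is to reduce the survival probability to a sum of the $t$-step Green's function over all final configurations and then to analyse the resulting determinant by steepest descent. First I would solve the master equation for the one-step operator $T(X,Y)$ by a coordinate Bethe ansatz, as will be carried out in Section 3: because the update is backward-sequential and the factors $\theta(k,l)$ couple only consecutive particles, the $t$-step transition probability $T_t(X,X^0)$ should admit a determinantal (free-fermion) representation whose entries are single-particle propagators deformed by the two-body factor carrying $\kappa$. In such a representation each entry is a contour integral in which the time enters only through a factor of the form $(1+p\xi)^t$ and the interaction only through a power of $\left(\kappa(1+\xi)-1\right)$, matching the building blocks visible in $g_{i,j}$.

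Next I would compute $\mathcal{P}_t(X^0)=\sum_X T_t(X,X^0)$. Since $P_0(X)=\delta_{X,X^0}$, this is the sum of the Green's function over all strictly ordered final $m$-tuples $x_1<\cdots<x_m$. Using the antisymmetry of the determinant to handle the ordering constraint, I would perform the nested geometric summations over the final coordinates one at a time; summing a factor $(1+\xi)^x$ over a half-line contributes a pole $1/\xi$ and shifts the exponent, which is exactly the mechanism producing the increasing powers $\xi^{-j}$ and the shifted arguments $x_m^0-x_i^0$ that appear in $g_{i,j}(x;\kappa)$. The outcome should be a closed determinant $\mathcal{P}_t(X^0)=\det\!\big[G_{i,j}(t)\big]$ in which all the $t$-dependence resides in the single-particle factor while the $\kappa$- and $X^0$-dependence is already packaged into the integrand.

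The core step is the large-$t$ asymptotics of this determinant, which is the content of Sections 4--5. The single-particle factor is sharply peaked with Gaussian width $\sigma=\sqrt{tp(1-p)}$, so after rescaling the integration variable near the saddle each entry behaves like a Gaussian of spread $\sigma$ modulated by the slowly varying $\kappa$- and position-dependent factors. The leading Gaussian contribution renders the matrix rank-deficient, so the determinant vanishes at leading order and one must expand to the order at which the columns first become linearly independent. The resulting Vandermonde structure simultaneously produces the power $[tp(1-p)]^{-m(m-1)/4}$, from a factor $\sigma^{-(j-1)}$ in the $j$-th column summing to $\sigma^{-m(m-1)/2}$, and, upon integrating the Vandermonde against the Gaussian weight over the ordered sector, reproduces the orthogonal Mehta integral $I_{m,1/2}$, which supplies the universal constant $2^m\prod_{l=1}^m\Gamma(l/2+1)/\pi^{m/2}$. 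The residual $\kappa$-dependent coefficients should assemble into $(1-\kappa)^{-m(m-1)/2}\det\!\big[g_{i,j}(x_m^0-x_i^0;\kappa)\big]$, giving precisely $A(\kappa;X^0)$.

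I expect the main obstacle to be this determinant asymptotics rather than the algebra of the first two steps. One must identify the exact order of cancellation in the degenerate Gaussian limit, justify interchanging the rescaled contour integration with the column expansion uniformly in a neighbourhood of the saddle, and control the tails of the summations so that they do not contaminate the leading term. Tracking the first correction in the saddle-point expansion of each entry, together with the logarithmic size of the region in which the Gaussian approximation remains valid, is what I anticipate will yield the stated relative error $O\!\left((\log t)^3 t^{-1/2}\right)$; keeping these estimates uniform in $i,j$ while the leading contributions cancel is the delicate part of the argument.
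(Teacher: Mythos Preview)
Your high-level ingredients are right --- Bethe ansatz for $T_t(X,X^0)$, saddle point, emergence of the orthogonal Mehta integral $I_{m,1/2}$ --- but the order of operations you propose is the reverse of what the paper does, and that reversal creates a real difficulty you have not addressed.

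You want to sum over the final configuration $X$ \emph{first}, using antisymmetry plus nested geometric series, and only then do steepest descent on a closed determinant $\det[G_{i,j}(t)]$. Two problems arise. First, the determinantal Green function has entries $F_{j-i}(x_i-x_j^0,t)$ in which the row index $i$ enters through the power of $(1-\kappa z)$; permuting the $x_i$ therefore does \emph{not} leave the determinant invariant up to sign, so the usual antisymmetrisation that would let you sum the $x_j$ independently is unavailable in that form. If instead you work with the symmetric integral $\frac{1}{m!}\oint\Lambda^t(Z)\Psi_Z(X)\overline{\Psi}_Z(X^0)\prod\frac{dz_i}{2\pi i z_i}$, where $\Psi_Z(X)$ \emph{is} column-antisymmetric in the $x_j$, the independent geometric sums produce a factor $\prod_i(1-z_i)^{-1}$. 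But the saddle of $\Lambda^t(Z)$ on the unit circle sits exactly at $z_i=1$, so you now face a pole coinciding with the saddle; the ``degenerate determinant'' you anticipate is a symptom of this, not the underlying issue, and you give no mechanism for resolving it.

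The paper proceeds in the opposite order. It keeps $X$ free, performs the saddle-point in $z_i$ at the $X$-dependent point $z_i^\ast=(1-v_i)p/((1-p)v_i)$ with $v_i=(x_i-x_m^0)/t$, and only afterwards sums over $X$ by converting the sum to an integral via a Riemann-sum lemma (Lemma~\ref{sum-to-integral}); this is what produces $I_{m,1/2}$ directly. To handle the $\kappa$-dependent factor $(1-\kappa z_i)^{1-i}$ --- whose pole may lie between the origin and the saddle --- the paper expands it as a geometric series $\sum_{n_i\geq0}\binom{i+n_i-2}{n_i}(\kappa z_i)^{n_i}$ and does the saddle point term by term; in the dominant range $|v_i-p|\lesssim t^{-1/2}\log t$ and $n_i\lesssim(\log t)^2$ the $x_i$- and $n_i$-dependence decouple, and the $n_i$-sums collapse to $(1-\kappa)^{-m(m-1)/2}$. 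The $X^0$-dependent determinant $\det[g_{i,j}]$ then comes out of a separate set of auxiliary $\xi_i$-contours introduced in a double-integral representation of $P_t(X,X^0)$, not from the summation over $X$. So your guess that the $\xi^{-j}$ in $g_{i,j}$ arises from half-line geometric sums over the final positions is not how it actually happens.
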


Thus, in the range $\kappa <1$, up to the factor $A(X^{0};\kappa )$, which
captures the dependence on the initial configuration $X^{0}$, the survival
probability reproduces Fisher's power law. All the dependence on $X^{0}$ is
in fact hidden in the determinantal part of $A(X^{0};\kappa )$. In some
particular cases the determinant can be simplified to a more transparent
expression. For example, for equidistant initial conditions,%
\begin{equation}
x_{m}^{0}-x_{i}^{0}=a\left( m-i\right) ,
\end{equation}%
where $a$ is a positive integer, it can be calculated explicitly:%
\begin{equation}
\det \left[ g_{i,j}(x_{m}^{0}-x_{i}^{0};\kappa )\right] _{_{1\leq i,j\leq
m}}=(a+\kappa -a\kappa )^{\frac{m\left( m-1\right) }{2}}.
\end{equation}

In the limit $\kappa \rightarrow 0$ the determinant reduces to
\begin{equation}
\det [g_{i,j}(x_{m}^{0}-x_{i}^{0};0)]=\prod\limits_{1\leq i<j\leq m}\frac{%
x_{j}^{0}-x_{i}^{0}}{j-i}.
\end{equation}%
Then, up to rescaling of space and time, one recovers the result \cite{Rubey}
for VW:
\begin{equation}
A(0;X^{0})=\prod\limits_{1\leq i<j\leq m}\left( x_{j}^{0}-x_{i}^{0}\right)
\left\{
\begin{array}{ll}
\pi ^{-\frac{m}{4}}2^{-\frac{m\left( m-2\right) }{4}}\prod\limits_{l=1}^{%
\frac{m}{2}}\frac{1}{\left( 2l-1\right) !}; & \mathrm{even}m\  \\
\pi ^{\frac{1}{4}-\frac{m}{4}}2^{-\frac{\left( m-1\right) ^{2}}{2}%
}\prod\limits_{l=1}^{\frac{\left( m-1\right) }{2}}\frac{1}{\left( 2l\right) !%
}; & \mathrm{odd}m\
\end{array}%
\right. .  \label{A(X^0;0)}
\end{equation}%
In the limit $\kappa \rightarrow 1$, we have
\begin{equation}
\det \left[ g_{i,j}(x_{m}^{0}-x_{i}^{0};1)\right] _{_{1\leq i,j\leq m}}=1.
\label{g_ij kappa->1}
\end{equation}%
This limit corresponds to the TASEP. Hence the asymptotics must change, as
the probability conservation is restored. The signature of this fact is the
divergence of the term $A(\kappa ;X^{0})$ that takes place in this limit.
Specifically, (\ref{A(X^0,kappa)}) and (\ref{g_ij kappa->1}) suggests that
as $\kappa $ approaches one, $A\left( \kappa ;X^{0}\right) $ diverges as $%
\left( 1-\kappa \right) ^{-m\left( m-1\right) /2}$. Comparing the exponent
of this expression with the one of the time decay $t^{-m\left( m-1\right)
/4} $, we can guess that the transition takes place on the scale $\left(
1-\kappa \right) \sim 1/\sqrt{t}$. This hypothesis is justified below.

\paragraph{Generic case, $\protect\kappa >1$}

In this case no values of $\kappa $ fall into the range (\ref{kappa range}).
Therefore, according to the Remark \ref{complex kappa remark}, the result
formally obtained for $\mathcal{P}_{t}\left( X^{0}\right) $ does not have a
probabilistic meaning in terms of SVW. However, it is still meaningful for
the description of current fluctuation in the TASEP.

\begin{theorem}
\label{kappa>1 Theorem}Let $\kappa \in \left( 1,\infty \right) $ be a fixed
real number, and let $\left\vert x_{i}^{0}-x_{j}^{0}\right\vert <\infty $
for any $i,j=1,\ldots ,m$. Then, as $t\rightarrow \infty $, the survival
probability $\mathcal{P}_{t}\left( X^{0}\right) $ for $m$ particles is
\begin{eqnarray}
\mathcal{P}_{t}\left( X^{0}\right) &=&\left( 1-p+\kappa p\right) ^{\left(
m-1\right) t}\left( 1-p+\kappa ^{1-m}p\right) ^{t}  \label{P_t(X^0) kappa>1}
\\
&\times &\frac{\left[ m\right] _{\kappa }^{m-1}\kappa ^{x_{1}^{0}+\cdots
+x_{m-1}^{0}-(m-1)x_{m}^{0}}}{\left[ m-1\right] _{\kappa }!}\left[ 1+O\left(
\left( \log t\right) ^{3}t^{-1/2}\right) \right]  \nonumber
\end{eqnarray}
\end{theorem}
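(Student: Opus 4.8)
The plan is to start from the same exact representation of the survival probability that underlies Theorem~\ref{kappa<1 Theorem}. Writing $\mathcal{P}_{t}(X^{0})$ as the sum over all final configurations of the $t$-step Green's function obtained from the master equation in Section~3, one is led to a determinant of single-variable contour integrals in which each integration variable $z_k=1+\xi_k$ carries three ingredients: the one-step free transfer factor $\lambda(z)^t=(1-p+pz)^t$, a position factor $z^{x}$ of the form $(1+\xi)^{x}$, and the Bethe-type interaction factor $(\kappa z-1)^{\,\cdot}$ visible already in $g_{i,j}(x;\kappa)$ of \eqref{g_i,j}. For $|\kappa|<1$ the contours stay near $\xi=0$ (i.e.\ $z=1$), the asymptotics is governed by a Gaussian saddle there, and one recovers the power law $[tp(1-p)]^{-m(m-1)/4}$. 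The whole point of the present statement is that for real $\kappa>1$ this saddle is no longer dominant: since $1-p+\kappa p>1$, a pole/critical configuration of the integrand, located away from $z=1$, produces an \emph{exponentially growing} contribution that overwhelms the saddle.

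Accordingly, the first step is to deform the contours and track which poles of the integrand are crossed as $\kappa$ increases through $1$. I expect the dominant contribution to come from pinning the $m$ fugacities to the string-like configuration $z=(\kappa,\dots,\kappa,\kappa^{1-m})$, i.e.\ $m-1$ of them at $z=\kappa$ and a single ``recoil'' fugacity at $z=\kappa^{1-m}$ (so that the product of fugacities is $1$, consistent with translation invariance). Evaluating $\prod_k\lambda(z_k)^t$ at this configuration immediately yields the two exponential factors, since $\lambda(\kappa)=1-p+\kappa p$ and $\lambda(\kappa^{1-m})=1-p+\kappa^{1-m}p$; this reproduces $(1-p+\kappa p)^{(m-1)t}(1-p+\kappa^{1-m}p)^{t}$. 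The $X^{0}$-dependence is then extracted from the position factors: evaluating the $z^{\,x_{m}^{0}-x_{i}^{0}}$ pieces at the relevant value of the interaction factor (the zero of $\kappa z-1$, i.e.\ $z=1/\kappa$) gives $\kappa^{\sum_{i=1}^{m-1}(x_i^{0}-x_m^{0})}=\kappa^{x_{1}^{0}+\cdots+x_{m-1}^{0}-(m-1)x_{m}^{0}}$, matching the claimed prefactor.

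The remaining task, which I expect to be the main obstacle, is to compute the residue coefficient and show that the combinatorial factors collapse to $[m]_{\kappa}^{m-1}/[m-1]_{\kappa}!$. Because the integral is a determinant and the poles are nested (the residue in one variable shifts the pole locations of the others, producing the cascade $\kappa,\kappa^2,\dots$), the residue extraction cannot be done naively term by term; it must be organized so that the products of Bethe interaction amplitudes and the Vandermonde-type Jacobian evaluated on the string telescope into $\kappa$-deformed integers. This is precisely the point where the $\kappa$-factorial $[m-1]_{\kappa}!=\prod_{k=1}^{m-1}[k]_{\kappa}$ and the power $[m]_{\kappa}^{m-1}$ emerge, and verifying the exact normalization here is the delicate part of the argument.

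Finally, to obtain the stated error term one must show that the dominant residue is separated from everything else. The subleading contributions are the remaining Gaussian saddle at $z=1$ and the subdominant poles; each is smaller than the leading residue by a factor of relative order $(\log t)^{3}t^{-1/2}$, exactly as in Theorem~\ref{kappa<1 Theorem}, the $(\log t)^3$ arising from the polynomial prefactors in the saddle-point expansion. Collecting the leading residue together with this remainder estimate yields \eqref{P_t(X^0) kappa>1}.
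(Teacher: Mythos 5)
You have correctly identified the skeleton of the answer: the dominant contribution for $\kappa>1$ is a ``bound-state'' configuration of the Bethe fugacities whose one-step eigenvalue is $(1-p+\kappa p)^{m-1}(1-p+\kappa^{1-m}p)$, the initial-position factors evaluated at $1/\kappa$ produce $\kappa^{x_1^0+\cdots+x_{m-1}^0-(m-1)x_m^0}$, and nested geometric sums produce the $\kappa$-factorial. But as a proof the proposal has the hole at its centre: the residue computation that would produce the coefficient $[m]_\kappa^{m-1}/[m-1]_\kappa!$ is only asserted, and this is precisely the step the paper declares ``technically difficult'' and deliberately avoids. The paper instead expands $(1-\kappa z_i)^{1-i}$ as a power series in $\kappa z_i$ with indices $n_i$, integrates term by term in the saddle-point approximation, and then observes that for $\kappa>1$ the roles of the sums over $n_i$ and over the final positions $x_i$ interchange: the $x_i$-sums converge geometrically (yielding \eqref{sum over x}, whence $\prod_{i=1}^{m-1}(\kappa^i-1)$ in the denominator and the $\kappa^{(m-1)(x_m^0-x_1)}$ weight), while the $v_i$-sums acquire shifted Laplace maxima at $u_i=\mathrm{u}(\kappa)$ and $u_1=\mathrm{u}(\kappa^{1-m})$ and reduce to the Mehta integral $I_{m-1,1}$; the final $\xi_k$-integrals are then done by nested \emph{simple}-pole residues in \eqref{P_t(X^0) kappa>1 (2)}. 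A related conceptual gap in your plan: the second fugacity $\kappa^{m-1}$ (your ``recoil'' root $\kappa^{1-m}$) is \emph{not} a pole of the integrand of \eqref{int} at fixed final configuration $X$ — it only appears as a shifted saddle after the ordered sums over $x_2<\cdots<x_m$ are performed, because the resulting factor $\kappa^{(m-1)(x_m^0-x_1)}$ tilts the $z_1$-integral. So ``deform the contours and track which poles are crossed'' cannot by itself locate this root; the sum over final configurations must be interleaved with the asymptotic analysis.

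Your account of the error term is also off. The Gaussian saddle at $z=1$ contributes $O(t^{-m(m-1)/4})$ against a leading term growing like $(1-p+\kappa p)^{(m-1)t}(1-p+\kappa^{1-m}p)^t$ (which exceeds $1$ in modulus for real $\kappa>1$), so that competition is exponentially suppressed, not suppressed by $(\log t)^3 t^{-1/2}$. The stated relative error in \eqref{P_t(X^0) kappa>1} comes instead from the Laplace evaluation of the sums around the shifted maxima $\left\vert v_i-u_i\right\vert<t^{-1/2}\log t$, the truncation of the summation ranges, and Stirling-type approximations of the binomial coefficients — the same bookkeeping as in the $\left\vert\kappa\right\vert<1$ case but carried out at the displaced critical points. (A minor additional point: your conventions are mixed — you evaluate $\lambda(z)=1-p+pz$ at $z=\kappa$ but the position factors at the zero $z=1/\kappa$ of $\kappa z-1$; in the paper's variables everything sits consistently at $z=1/\kappa$ with $\Lambda=\prod(1-p+p/z_i)$.) In short, the proposal points at the right answer and the right dominant configuration, but the two decisive ingredients — the normalization $[m]_\kappa^{m-1}/[m-1]_\kappa!$ and the error control — are exactly the parts left undone.
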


Here we use the common notations
\begin{equation}
\left[ m\right] _{q}=\frac{1-q^{m}}{1-q}
\end{equation}%
for the $q$-number and
\begin{equation}
\left[ m\right] _{q}!=\left[ 1\right] _{q}\cdots \left[ m\right] _{q}
\end{equation}%
for the $q$-factorial. Note that the $q$-numbers turn to usual numbers $%
\left[ m\right] _{q}\rightarrow m$ in the limit $q\rightarrow 1$.

\paragraph{\protect\bigskip Transition regime $\protect\kappa \rightarrow 1$}

Consider the limit%
\begin{equation}
t\rightarrow \infty ,\kappa \rightarrow 1,\left( 1-\kappa \right) \sqrt{t}%
=const.  \label{transreg}
\end{equation}%
We introduce the parameter%
\begin{equation}
\alpha =\lim_{t\rightarrow \infty }\left[ \left( 1-\kappa \right) \sqrt{%
tp\left( 1-p\right) }\right] .  \label{alpha}
\end{equation}

\begin{theorem}
\label{kappa->1 theorem}Let the condition $\left\vert x_{i}^{0}-x_{j}^{0}\right\vert
<\infty $ hold for all $i,j=1,\ldots ,m$. Then,in the limit (\ref{transreg}%
), for the parameter $\alpha \in \mathbb{C}$ defined in (\ref{alpha}) taking
any fixed complex value, $\mathcal{P}_{t}\left( X^{0}\right) $ converges to
\begin{equation}
\mathcal{P}_{t}=f_{m}\left( \alpha \right) \left[ 1+O\left( \frac{\left(
\log t\right) ^{3}}{\sqrt{t}}\right) \right] ,  \label{result transition}
\end{equation}%
where the function\ $f_{m}\left( \alpha \right) $ has the form of a multiple
integral:
\begin{eqnarray}
f_{m}\left( \alpha \right) &=&\frac{\left( -1\right) ^{\frac{m\left(
m-1\right) }{2}}}{\left( 2\pi \right) ^{\frac{m}{2}}2!\cdots \left(
m-2\right) !}  \nonumber \\
&\times &\int\limits_{-\infty }^{\infty }du_{1}\int\limits_{u_{1}}^{\infty
}du_{2}\cdots \!\!\int\limits_{u_{m-1}}^{\infty
}du_{m}\int\limits_{0}^{\infty }d\nu _{2}\cdots \!\int\limits_{0}^{\infty
}d\nu _{m}  \label{f_m(alpha)} \\
&\times &e^{-\frac{1}{2}u_{1}^{2}}\prod\limits_{i=2}^{m}\nu _{i}^{i-2}e^{-%
\frac{1}{2}\left( u_{i}+\nu _{i}\right) ^{2}-\alpha \nu _{i}}\Delta \left(
u_{1},\nu _{2}+u_{2},\ldots ,\nu _{m}+u_{m}\right) ,  \nonumber
\end{eqnarray}%
where
\begin{equation}
\Delta \left( x_{1},\ldots ,x_{m}\right) =\prod\limits_{1\leq i<j\leq
m}\left( x_{i}-x_{j}\right)
\end{equation}%
is the Vandermonde determinant.
\end{theorem}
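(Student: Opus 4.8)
The plan is to derive all three theorems from one exact finite-$t$ representation of $\mathcal{P}_t(X^0)$ and to separate them only by how the limit $t\to\infty$ is tied to $\kappa$. For the present statement I would take as input the determinantal (multiple contour integral) formula for $\mathcal{P}_t(X^0)$ produced by solving the master equation in Section 3 --- schematically a sum over the final configuration $X^t=X$ of a Sch\"utz-type determinant whose entries are single contour integrals carrying the time factor $(1-p+p\xi)^{t}$, a two-body factor built from $\kappa(1+\xi)-1$, and spatial powers in $\xi$. This is the same object whose asymptotics at fixed $|\kappa|<1$ yields (\ref{P_t(X^0) kappa<1})--(\ref{g_i,j}). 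The conceptual point I would stress from the outset is that the limits do not commute: one cannot simply set $\kappa\to1$ in (\ref{A(X^0,kappa)}), because $A(\kappa;X^0)\sim(1-\kappa)^{-m(m-1)/2}$ diverges exactly at the rate $[tp(1-p)]^{m(m-1)/4}$ at which the power-law prefactor $[tp(1-p)]^{-m(m-1)/4}$ decays. On the scale (\ref{transreg}) these two divergences cancel and leave a finite, genuinely $\alpha$-dependent answer, so the asymptotic analysis must be redone with $\kappa$ and $t$ coupled through (\ref{alpha}).

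Next I would insert $\kappa = 1-\alpha/\sqrt{tp(1-p)}$ and rescale the two groups of variables in the exact formula. The final positions fluctuate diffusively about $pt$, so I set $x_i = pt+\sqrt{tp(1-p)}\,u_i$; the ordered sum $\sum_{x_1<\cdots<x_m}$ then becomes a Riemann sum converging to the nested integral $\int_{-\infty}^{\infty}du_1\int_{u_1}^{\infty}du_2\cdots\int_{u_{m-1}}^{\infty}du_m$, and steepest descent of each contour integral $\oint d\xi\,(1-p+p\xi)^t/\xi^{\bullet}$ through its saddle at $\xi=1$ --- equivalently the local central limit theorem for the binomial --- produces the Gaussian weights. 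Because the initial separations $x_i^0-x_j^0$ are $O(1)$ while the diffusive window is of width $\sqrt{t}$, all dependence on $X^0$ washes out, which is why $f_m(\alpha)$ in (\ref{f_m(alpha)}) is independent of the initial configuration. The auxiliary variables $\nu_i$ come from the rescaled two-body factor: as $\kappa\to1$ the factor $\kappa(1+\xi)-1$ degenerates, and keeping its contribution to the appropriate $(i-1)$-dependent power yields, after rescaling, the weight $\nu_i^{\,i-2}e^{-\alpha\nu_i}$ on $(0,\infty)$ together with the shift $u_i\mapsto u_i+\nu_i$ inside both the Gaussian and the determinant.

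The remaining algebraic step is to verify that the rescaled determinant collapses to the Vandermonde $\Delta(u_1,\nu_2+u_2,\ldots,\nu_m+u_m)$. I would extract in each column the leading nonvanishing power of $1/\sqrt{t}$ after the substitutions above; the surviving monomials are precisely those of a Vandermonde in the shifted variables, and the subleading terms are eliminated by elementary row and column operations. Tracking the powers of $\sqrt{tp(1-p)}$ generated by this extraction, by the Jacobian of the rescaling, and by the Gaussian and $\nu$-normalizations reproduces the overall constant $(-1)^{m(m-1)/2}/\bigl((2\pi)^{m/2}\,2!\cdots(m-2)!\bigr)$ in (\ref{f_m(alpha)}). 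Convergence for complex $\alpha$ is not an issue, since the Gaussian factor $e^{-\frac12(u_i+\nu_i)^2}$ dominates $e^{-\alpha\nu_i}$ for every fixed $\alpha\in\mathbb{C}$; uniform convergence on compact sets in $\alpha$ then transfers analyticity and justifies the complex-$\alpha$ statement of the theorem.

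I expect the main obstacle to be the uniform control of errors, which is what yields the correction $O((\log t)^3 t^{-1/2})$ in (\ref{result transition}). The delicate issue is that the local-CLT / steepest-descent expansion of the propagator must hold uniformly over the entire diffusive window with a remainder that is summable, not merely pointwise small, because the sum over final configurations contains $\sim t^{m/2}$ terms and naive pointwise bounds would accumulate. I would therefore split the sum into a bulk region, treated by a uniform local-CLT expansion with explicit remainder, and a tail region $|u_i|\gtrsim\log t$, where crude large-deviation bounds on the binomial make the contribution negligible; the logarithmic cutoff is the origin of the $(\log t)^3$ factor. A secondary, purely algebraic, point requiring care is to confirm that the cancellations producing the Vandermonde are exact, so that no lower-order term in the determinant entries survives to contaminate the limiting integrand.
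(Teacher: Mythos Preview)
Your proposal is essentially the paper's approach: start from the exact contour-integral formula for $P_t(X,X^0)$, expand the pole factor $(1-\kappa z_i)^{1-i}$ to introduce summation indices $n_i$, do steepest descent through $z^\ast=1$ to produce Gaussians in $u_i=(x_i-pt)/\sqrt{tp(1-p)}$, rescale $n_i\to\nu_i\sqrt{tp(1-p)}$ so that $\kappa^{n_i}\to e^{-\alpha\nu_i}$ and $\binom{i+n_i-2}{n_i}\to\nu_i^{\,i-2}/(i-2)!$, and invoke a sum-to-integral lemma with a $|u_i|\lesssim\log t$ cutoff for the error bound. One presentational difference worth noting: the paper keeps the \emph{positive} power $(1-\kappa z)^{j-1}$ inside the determinant entries intact through the saddle-point step, which produces a determinant of Hermite polynomials $H_{j-1}$ that is then reduced to the Vandermonde by row operations; your ``extract the leading nonvanishing power of $1/\sqrt{t}$ in each column'' is the same reduction in disguise, but writing it via Hermite polynomials makes transparent why the saddle coinciding with the zero of $(1-\kappa z)$ does not spoil the expansion.
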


The argument of $\mathcal{P}_{t}$ in (\ref{result transition}) can be
omitted as the dependence on the initial configuration is lost in the limit
under consideration. The limiting behaviors of $f_{m}(\alpha )$ match the
TASEP and VW asymptotics. Indeed, we prove in Section~5 that
\begin{eqnarray}
&&\lim_{\Re\alpha \rightarrow \infty }\alpha ^{\frac{m\left( m-1\right) }{2}%
}f_{m}\left( \alpha \right) =\frac{1}{m!}\frac{2^{m}}{\pi ^{m/2}}%
\prod\limits_{l=1}^{m}\Gamma \left( l/2+1\right) ,  \label{alpha->+infty} \\
&&f_{m}\left( 0\right) =1  \label{alpha=0} \\
&&\lim_{\Re\alpha \rightarrow -\infty }e^{-\alpha ^{2}\frac{m(m-1)}{2}%
}f_{m}\left( \alpha \right) =\frac{m^{m-1}}{\left( m-1\right) !}
\label{alpha->-infty}
\end{eqnarray}%
In (\ref{alpha->+infty}) and (\ref{alpha->-infty}) the imaginary part of $%
\alpha $ is implied to take an arbitrary fixed value. The proof of these
three limits given in Section \ref{f_m(alpha)_section} is done by reducing $%
f_{m}\left( \alpha \right) $ to different cases of Mehta integrals, $%
I_{m,1/2}$, $I_{m.,1}$ and $I_{m-1,1}$ respectively, which are well known in
the theory of Gaussian random matrix ensembles. \cite{Mehta Dyson}.

The above results can be illustrated by the example of the two particle
case, $m=2$, when the integral (\ref{f_m(alpha)}) for $f_{2}\left( \alpha
\right) $ simplifies significantly.
\begin{equation}
f_{2}\left( \alpha \right) =e^{\alpha ^{2}}\mathrm{Erfc}\left( \alpha
\right) .  \label{f_2(alpha)}
\end{equation}%
Here $\mathrm{Erfc}\left( \alpha \right) $ is the complementary Error
function%
\begin{equation}
\mathrm{Erfc}\left( \alpha \right) =\frac{2}{\sqrt{\pi }}\int\limits_{\alpha
}^{\infty }dxe^{-x^{2}}.
\end{equation}%
In Fig.~\ref{f_2(alpha)_fig} we show how $f_{2}\left( \alpha \right) $
interpolates between the SVW and TASEP limiting cases, which are (\ref%
{alpha->+infty}) and (\ref{alpha=0}) respectively.
\begin{figure}[tbp]
\unitlength=1mm \makebox(110,50)[cc] {\psfig{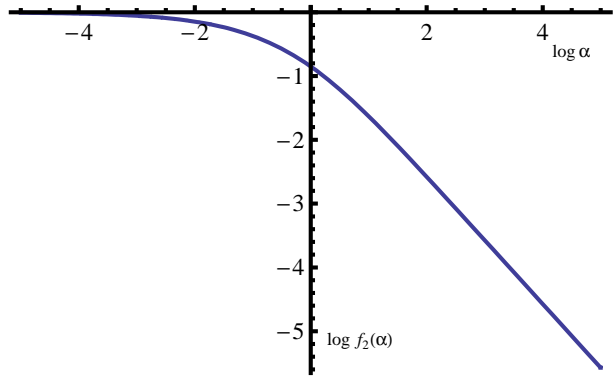}}
\caption{The log-log plot of the function $f_{2}(\protect\alpha )$ in the
range $\protect\alpha >0$. }
\label{f_2(alpha)_fig}
\end{figure}

\subsection{Current fluctuations in TASEP}

Consider the process with the transition weights $\widetilde{T}(X,Y)$
defined similarly to (\ref{weights}) but where the functions $\vartheta (k)$
and $\theta \left( k,l\right) $ are replaced by
\begin{eqnarray}
\widetilde{\vartheta }(k) &=&\left( 1-\widetilde{p}\right) \delta
_{k,0}+e^{\gamma }\widetilde{p}\delta _{k,1},  \label{weights_21} \\
\widetilde{\theta }\left( k,l\right) &=&\left( 1-\widetilde{p}\left(
1-\delta _{l,1}\right) \right) \delta _{k,0}+e^{\gamma }\widetilde{p}\delta
_{k,1}.  \label{weights_22}
\end{eqnarray}%
Here $0<\widetilde{p}<1$ and $\gamma $ is a complex-valued parameter. It is
not difficult to see that these transition weights correspond to the TASEP
except that for each particle, the probability $\widetilde{p}$ to jump is
multiplied by an additional factor $e^{\gamma }$, i.e.%
\begin{equation}
e^{\gamma Y_{t}}P_{TASEP}(X^{0},\ldots ,X^{t})=\widetilde{T}%
(X^{t},X^{t-1})\cdots \widetilde{T}(X^{1},X^{0})P_{0}(X_{0}),
\end{equation}%
where $P_{TASEP}(X^{0},\ldots ,X^{t})$ is the probability for a sequence of
particle configurations $X^{0},\ldots ,X^{t}$, to occur in the TASEP for $t$
successive steps and $Y_{t}$ is the total number of jumps made by all
particles in this sequence of configurations. Thus, one can calculate the
moment generating function for the cumulative particle current as follows,
\begin{equation}
\left\langle e^{\gamma Y_{t}}\right\rangle _{TASEP}=\sum\limits_{X^{0}\leq
X^{1}\cdots \leq X^{t}}e^{\gamma Y_{t}}P_{TASEP}(X^{0},\ldots ,X^{t}).
\label{generating}
\end{equation}%
On the other hand, we can see that, if we define%
\begin{eqnarray}
\kappa &=&e^{-\gamma },  \label{tasep-svw_1} \\
p &=&\frac{\widetilde{p}}{\left( 1-\widetilde{p}\right) e^{-\gamma }+%
\widetilde{p}},  \label{tasep-svw_2}
\end{eqnarray}%
then the following relation exists between the transition weights $%
\widetilde{T}(X,X^{\prime })$ defined in (\ref{weights_21}),(\ref{weights_22}%
) and those of SVW, (\ref{weights_11}),(\ref{weights_12}),
\begin{equation}
\left( 1-\widetilde{p}\right) ^{-m}\widetilde{T}(X,X^{\prime })=\left(
1-p\right) ^{-m}T(X,X^{\prime }).
\end{equation}%
As a result we have
\begin{equation}
\left\langle e^{\gamma Y_{t}}\right\rangle _{TASEP}=\left( 1+\widetilde{p}%
\left( e^{\gamma }-1\right) \right) ^{mt}\mathcal{P}_{t}\left( X^{0}\right) ,
\label{cumgen}
\end{equation}%
where $\mathcal{P}_{t}\left( X^{0}\right) $ is the survival probability
calculated for the SVW model, and the parameters $\kappa $ and $p$ of SVW
are related to the parameters $\widetilde{p},\gamma $ of TASEP by (\ref%
{tasep-svw_1}), (\ref{tasep-svw_2}). The function (\ref{cumgen}) encodes all
information about the distribution of the integrated particle current. Thus,
we can apply the Theorems \ref{kappa<1 Theorem}-\ref{kappa->1 theorem} to
obtain the asymptotic form of this distribution.

\paragraph{Large deviation function.}

It follows from the Theorems \ref{kappa<1 Theorem} and \ref{kappa>1 Theorem}
and the formula (\ref{cumgen}) that for fixed $\gamma \in \mathbb{R}$ the
asymptotic form of the generating function of the particle current $Y_{t}$
is as follows
\begin{equation}
\left\langle e^{\gamma Y_{t}}\right\rangle _{TASEP}\simeq \left\{
\begin{array}{cc}
\frac{\left( 1+\widetilde{p}\left( e^{\gamma }-1\right) \right) ^{mt+\frac{%
m\left( m-1\right) }{2}}}{\left[ te^{\gamma }\widetilde{p}(1-\widetilde{p})%
\right] ^{\frac{m\left( m-1\right) }{4}}}A\left( e^{-\gamma },X^{0}\right) &
\gamma >0 \\
\left( 1-\widetilde{p}\left( 1-e^{m\gamma }\right) \right) ^{t}\frac{\left(
1-e^{-\gamma m}\right) ^{m-1} e^{\gamma \sum_{i=1}^{m-1}(x_m^0-x_i^0) }}{%
\left( 1-e^{-\gamma }\right) \cdots \left( 1-e^{-\gamma \left( m-1\right)
}\right) } & \gamma \leq 0%
\end{array}%
\right. .
\end{equation}%
From here we conclude that a scaled cumulant generating function of the
random variable $Y_{t}/t$ exists%
\[
\lambda \left( \gamma \right) \equiv \lim_{t\rightarrow \infty }\frac{1}{t}%
\log \left\langle e^{Y_{t}\gamma }\right\rangle _{TASEP}=\left\{
\begin{array}{cc}
m\log \left( 1+\widetilde{p}\left( e^{\gamma }-1\right) \right) & \gamma
\geq 0 \\
\log \left( 1-\widetilde{p}\left( 1-e^{m\gamma }\right) \right) & \gamma
\leq 0%
\end{array}%
\right. .
\]%
It is convex and differentiable everywhere. Therefore, we refer to the G\"{a}%
rtner-Ellis theorem \cite{Gartner},\cite{Ellis} to show that the random
variable $v_{t}$ satisfies the large deviation principle with a rate function%
\[
I\left( v\right) \equiv \lim_{t\rightarrow \infty }\frac{1}{t}\log P\left(
Y_{t}/t=v\right) =\sup_{\gamma }\left( \gamma v-\lambda \left( \gamma
\right) \right) .
\]%
The solution of the maximization problem yields

\[
I\left( v\right) =\left\{
\begin{array}{cc}
mB\left( v/m\right) & v\geq m\widetilde{p} \\
B\left( v/m\right) & v\leq m\widetilde{p}%
\end{array}%
\right. ,
\]%
where $B\left( v\right) $ is the usual rate function of the Bernoulli process%
\[
B\left( v\right) =\left( 1-v\right) \log \frac{1-p}{1-v}+v\log \frac{p}{v}.
\]

\paragraph{Central limit theorem scaling.}

The transition regime corresponds to the following scaling limit
\begin{equation}
t\rightarrow \infty ,\qquad \gamma \rightarrow 0,\qquad \gamma \sqrt{t}%
=const.  \label{scaling_lim}
\end{equation}%
To translate the results obtained for this case we consider the random
variable%
\begin{equation}
y=\lim_{t\rightarrow \infty }\frac{Y_{t}-m\widetilde{p}t}{\sqrt{t\widetilde{p%
}\left( 1-\widetilde{p}\right) }},
\end{equation}%
where the convergence is in distribution. Taking the limit (\ref{scaling_lim}%
) in (\ref{cumgen}), using the Theorem \ref{kappa->1 theorem} and noting
that $\widetilde{p}\rightarrow p$ as $\gamma \rightarrow 0$ we obtain
\begin{equation}
\left\langle e^{\alpha y}\right\rangle _{TASEP}=e^{m\alpha
^{2}/2}f_{m}\left( \alpha \right) ,  \label{generating_transition}
\end{equation}%
where $\alpha $ is an arbitrary complex valued parameter related to $\gamma $
via -%
\begin{equation}
\alpha =\lim_{t\rightarrow \infty }\gamma \sqrt{tp\left( 1-p\right) }.
\end{equation}%
The random variable $y$ is the rescaled deviation of the integrated current $%
Y_{t}$ from $m\widetilde{p}t$, i.e. from the average value of $Y_{t}$ for $m$
non-interacting particles jumping with probability $\widetilde{p}$. Note
that $y$ is the variable that, in the case of free non-interacting
particles, satisfies the conditions for the applicability of the Central
Limit Theorem (CLT). According to the CLT the probability density function
(PDF) of $y$ for $m$ independent particles is the Gauss distribution,
\begin{equation}
P_{m}^{\mathrm{free}}(y)=\exp \left( -y^{2}/\left( 2m\right) \right) /\sqrt{%
2\pi m}
\end{equation}%
Correspondingly, the generating function of its moments is
\begin{equation}
\left\langle e^{\alpha y}\right\rangle _{\mathrm{free}}=\exp \left( m\alpha
^{2}/2\right) ,
\end{equation}%
which is the first factor in the moment generating function (\ref%
{generating_transition}). Therefore, the form of the second factor, $%
f_{m}(\alpha )$, shows how the distribution of $y$ differs from the one for
free particles.

The moment generating function contains all the information about the
original distribution. In particular, the cumulants of $y$ are given by the
derivatives of its logarithm at $\alpha =0$,
\begin{equation}
\left\langle y^{n}\right\rangle _{c}=\frac{\partial ^{n}}{\partial \alpha
^{n}}\left. \log \left\langle e^{\alpha y}\right\rangle _{TASEP}\right\vert
_{\alpha =0}.
\end{equation}%
The value of the first derivative, i.e.,%
\begin{equation}
\left\langle y\right\rangle _{c}=\left\langle y\right\rangle =f_{m}^{\prime
}\left( 0\right) ,
\end{equation}%
shows how the difference between the mean velocity $v_{m}$ of the center of
mass of the particles and that of free non-interacting particles, which is $%
\widetilde{p}$, decays with time $t$,
\begin{equation}
v_{m}=\frac{\left\langle Y_{t}\right\rangle }{mt}\simeq \widetilde{p}+\frac{%
\left\langle y\right\rangle _{c}}{\sqrt{t}}\frac{\sqrt{\widetilde{p}\left( 1-%
\widetilde{p}\right) }}{m}\left\langle y\right\rangle _{c}.
\end{equation}%
A nonzero value of $f_{m}^{\prime }\left( 0\right) $ implies that this
difference is of order of $t^{-1/2}$. As the TASEP interaction slows down
the particle motion, one expects it to be negative, i.e.
\begin{equation}
f_{m}^{\prime }\left( 0\right) <0.
\end{equation}%
The second cumulant
\begin{equation}
\left\langle y^{2}\right\rangle _{c}=\left\langle y^{2}\right\rangle
-\left\langle y\right\rangle ^{2}=m+f_{m}^{\prime \prime }\left( 0\right)
-\left( f_{m}^{\prime }(0)\right) ^{2}
\end{equation}%
is related to the diffusion constant $\Delta _{m}$ of the center of mass.
\begin{equation}
\Delta _{m}\equiv \frac{1}{m^{2}}\lim_{t\rightarrow \infty }\frac{%
\left\langle Y_{t}^{2}\right\rangle -\left\langle Y_{t}\right\rangle ^{2}}{t}%
=\frac{\widetilde{p}\left( 1-\widetilde{p}\right) }{m^{2}}\left\langle
y^{2}\right\rangle _{c}
\end{equation}%
The next cumulants, e.g.
\begin{eqnarray}
\left\langle y^{3}\right\rangle _{c} &\equiv &\left\langle
y^{3}\right\rangle -3\left\langle y^{2}\right\rangle \left\langle
y\right\rangle +2\left\langle y\right\rangle ^{3}=\left. \left( \log
f_{m}\left( \alpha \right) \right) ^{\prime \prime \prime }\right\vert
_{\alpha =0}, \\
\left\langle y^{4}\right\rangle _{c} &\equiv &\left\langle
y^{4}\right\rangle -4\left\langle y^{3}\right\rangle \left\langle
y\right\rangle -3\left\langle y^{2}\right\rangle ^{2} \\
&&+12\left\langle y\right\rangle ^{2}\left\langle y^{2}\right\rangle
-6\left\langle y\right\rangle ^{4}=\left. \left( \log f_{m}\left( \alpha
\right) \right) ^{\left( 4\right) }\right\vert _{\alpha =0},  \nonumber
\end{eqnarray}%
quantify the discrepancy of the distribution from a Gaussian form, being
identically zero for the latter.

The asymptotical behavior of the generating function at large absolute
values of $\Re \alpha $ can be readily obtained from the ones of $%
f_{m}(\alpha )$, (\ref{alpha->+infty}),(\ref{alpha->-infty}).
\begin{equation}
\left\langle e^{\alpha y}\right\rangle _{TASEP}\simeq \left\{
\begin{array}{lc}
\alpha ^{-\frac{1}{2}m(m-1)}e^{\frac{1}{2}m\alpha ^{2}}\frac{%
2^{m}\prod\nolimits_{l=1}^{m}\Gamma \left( l/2+1\right) }{\pi ^{m/2}m!}, &
\Re \alpha \rightarrow \infty \\
e^{\frac{1}{2}\alpha ^{2}m^{2}}\frac{m^{m-1}}{\left( m-1\right) !}, & \Re
\alpha \rightarrow -\infty%
\end{array}%
\right.  \label{genfun assympt}
\end{equation}%
The PDF of the random variable $y$ can be obtained as an inverse Laplace
transform of its moment generating function (\ref{generating_transition})
\begin{equation}
P_{m}(y)=\int\limits_{\beta -\mathrm{i\infty }}^{\beta +\mathrm{i\infty }%
}e^{m\alpha ^{2}/2-\alpha y}f_{m}\left( \alpha \right) \frac{d\alpha }{2\pi
\mathrm{i}}  \label{inverselaplace}
\end{equation}%
As the function $f_{m}\left( \alpha \right) $ is bounded and analytic in any
vertical strip of finite width, the parameter $\beta $ can be chosen
arbitrarily. The asymptotic results (\ref{genfun assympt}) for the
generating function can be used in the integral (\ref{inverselaplace}) to
evaluate the asymptotics for PDF $P_{m}(y).$ Choosing $\beta =y/m$ for $%
y\rightarrow \infty $ and $\beta =y/m^{2}$ for $y\rightarrow -\infty $ we
obtain
\begin{equation}
P_{m}(y)\simeq \left\{
\begin{array}{lc}
\left( \frac{m}{y}\right) ^{\frac{m(m-1)}{2}}e^{-\frac{y^{2}}{2m}}\frac{%
\prod\nolimits_{l=1}^{m}\Gamma \left( l/2+1\right) }{m!\sqrt{2\pi m}}\frac{%
2^{m}}{\pi ^{m/2}}, & y\rightarrow \infty \\
e^{-\frac{y^{2}}{2m^{2}}}\frac{m^{m-2}}{\sqrt{2\pi }\left( m-1\right) !}, &
y\rightarrow -\infty%
\end{array}%
\right. .  \label{tails}
\end{equation}%
Thus, the form of the distribution $P_{m}(y)$ is far from being symmetric,
having tails of two Gaussian-like functions with different dispersions, $%
m^{2}$ and $m$, on the left and right respectively, the latter also
multiplied by "Fisher's factor" $y^{-m(m-1)/2}$.

Let us compare these result with the data obtained from Monte Carlo
simulations. We modelled the TASEP for $m=2,3,4,5$ particles, which have
evolved for $t=10^{6}$ time-steps, the statistics having been collected from
$10^{6}$ samples. We would like to compare the data obtained for the
generating function $\left\langle e^{\alpha y}\right\rangle $ and the PDF $%
P_{m}\left( y\right) $ with our predictions. An explicit evaluation of these
functions requires detailed analysis of the function $f_{m}\left( \alpha
\right) $, which is given by the multiple integral (\ref{f_m(alpha)}). For
arbitrary $m$ this needs a significant calculational effort, which is beyond
the goals of the present article. Fortunately, for $m=2$ the function $%
f_{2}\left( \alpha \right) $ is simple enough, being given by (\ref%
{f_2(alpha)}), and we can use it for plotting the generating function and
the distribution. In Fig.~\ref{genfun fig} we show a plot of the logarithm
of the $m=2$ moment generating function, whose analytic expression is
\begin{equation}
\left\langle e^{\alpha y}\right\rangle _{TASEP}=e^{2\alpha ^{2}}\left( 1-%
\mathrm{Erf}\left( \alpha \right) \right) .  \label{genfun_m=2}
\end{equation}%
It has a skew convex form growing more rapidly to the left than to the
right, with a minimum at $\alpha =0.432752$. One can see good agreement with
the numerical data in the central part of the graph. There is some
discrepancy at the tails, which can be attributed to the finite-time
corrections, i.e. the lack of statistics of large events at the finite
period of measurement, which becomes significant when the absolute value of $%
\alpha $ is large.

\begin{figure}[h]
\unitlength=1mm \makebox(110,50)[cc] {\psfig{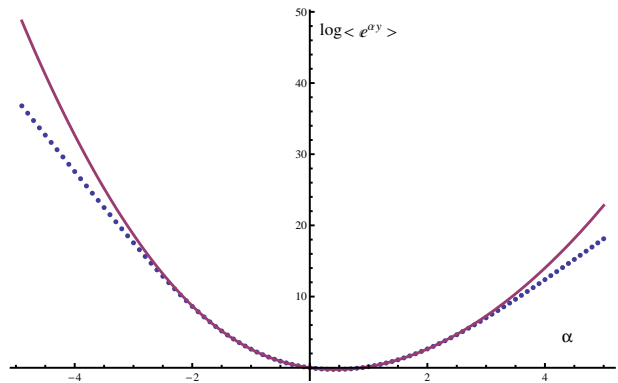}}
\caption{Plot of the logarithm of the moment generating function for $m=2$.
Solid line is the plot of the formula (\protect\ref{genfun_m=2}). Dotted
line is the result of Monte Carlo simulations. }
\label{genfun fig}
\end{figure}
The function (\ref{genfun_m=2}) allows a calculation of any derivatives,
and, hence, of any cumulants of the random variable $y$. In Table~\ref%
{cumulants_table} we show the first four cumulants for $m=2$, the case of
two particles. Their values are in good agreement with the results from the
Monte Carlo simulations.
\begin{table}[h]
{\hspace{1.5cm}}
\begin{tabular}{|l|l|l|l|}
\hline
& {\small Analytic} & {\small Numerical Analytic } & {\small Monte Carlo} \\
\hline
$\left\langle y\right\rangle _{c}$ & $-2\pi ^{-1/2}$ & -1.12838 & -1.12545
\\ \hline
$\left\langle y^{2}\right\rangle _{c}$ & $4-4\pi ^{-1}$ & 2.72676 & 2.72518
\\ \hline
$\left\langle y^{3}\right\rangle _{c}$ & $4(\pi -4)\pi ^{-3/2}$ & -0.616636
& -0.617642 \\ \hline
$\left\langle y^{4}\right\rangle _{c}$ & $32(\pi -3)\pi ^{-2}$ & 0.459083 &
0.498263 \\ \hline
\end{tabular}%
\label{cumulants_table}
\caption{Cumulants of the random variable $y$.}
\end{table}
In Fig.~\ref{P_2 fig} we show the result of numerical evaluation of the
integral (\ref{inverselaplace}) for $m=2$.
\begin{figure}[h]
\unitlength=1mm \makebox(120,50)[cc]
{\psfig{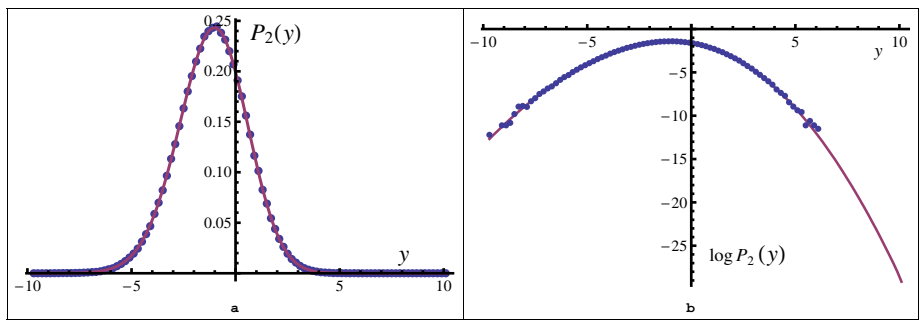}}
\caption{Probability density function $P_{m}(y)$ for $m=2$ particles (a) and
its logarithm (b). The solid line shows the theoretical predictions for the
distributions.}
\label{P_2 fig}
\end{figure}
There is a very good agreement with the simulation results. At first glance
the form of the distribution shown on Fig.~\ref{P_2 fig}a appears
Gaussian-like. A more accurate impression of the form of the distribution is
given by the logarithmic plot of Fig.~\ref{P_2 fig}b which shows that the
distribution is actually skew, decreasing more rapidly as $y$ grows than as
it decreases.

Simulation results obtained for more than two particles can be tested
against the asymptotical formulas (\ref{tails}) for the tails of the
distribution $P_{m}\left( y\right) $. In Fig.~\ref{many particles} we plot
the distributions measured for $m=2,3,4,5$ particles, (Fig.~\ref{many
particles}a), and its logarithm, (Fig.~\ref{many particles}b), the latter
being compared with the graphs of (\ref{tails}).

\begin{figure}[h]
\unitlength=1mm
\makebox(120,50)[cc]
{\psfig{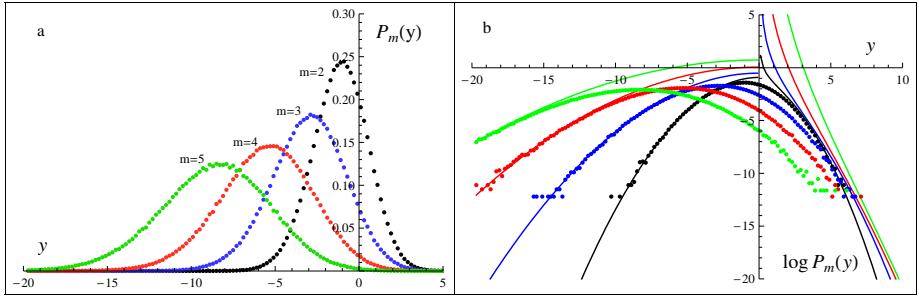}}
\caption{Probability density function $P_{m}(y)$ for $m=2,3,4,5$ particles
(a) and its logarithm (b). The solid lines show the theoretical asymptotics
of the tails of the distributions.}
\label{many particles}
\end{figure}

One can see that for all the four graphs the left tails are perfectly fitted
already for rather small values of $y$. A good fit of the right tail takes
place only for $m=2$. For $m=3$ only a few data points approach the
asymptotical line, i.e. in this case the right asymptotical regime is
actually at the borderline of the statistics. In the cases $m=3,4$ the
statistics available is clearly not good enough to reach the asymptotical
regime. Significantly larger evolution time and statistics would have to be
considered.

\section{ The master equation}

From now on we consider only the SVW model dependent on the parameters $p$
and $\kappa .$ Our first step is to calculate the probability $P_{t}\left(
X,X^{0}\right) $ of transition from the configuration $X^{0}$ to $X$ for
arbitrary time $t$:
\begin{equation}
P_{t}\left( X,X^{0}\right) =\sum\limits_{X^{0}\leq X^{1}\cdots \leq
X^{t}\equiv X}P(X^{0},\ldots ,X^{t}).
\end{equation}%
The method of finding the transition probability was first developed by Sch%
\"{u}tz for the continuous time TASEP \cite{Schuetz TASEP}, who used the
Bethe Ansatz first applied to the ASEP by Gwa and Spohn \cite{Gwa Spohn}.
Here we follow a similar procedure. The transition probability obeys the
master equation
\begin{equation}
P_{t}\left( X,X^{0}\right) =\sum\limits_{X^{\prime }}T\left( X,X^{\prime
}\right) P_{t-1}(X^{\prime },X^{0});
\end{equation}%
the transition weights $T\left( X,X^{\prime }\right) $ being defined as
above, (\ref{weights})-(\ref{weights_12}). The problem of finding the
eigenvectors and eigenvalues of the matrix $T\left( X,X^{\prime }\right) $
can be solved by the Bethe Ansatz technique. As this technique is rather
standard and has been reviewed in many monographs, we simply state the
results here. For details of similar derivations, the reader can consult for
example with the review \cite{Derrida}. As a result we obtain the solution
of the left and right eigenvalue problems for the Markov matrix $T\left(
X,X^{\prime }\right) $:
\begin{eqnarray}
\Lambda \left( Z\right) \Psi _{Z}\left( X\right) &=&\sum\limits_{X^{\prime
}}T\left( X,X^{\prime }\right) \Psi _{Z}\left( X^{\prime }\right) , \\
\Lambda \left( Z\right) \overline{\Psi }_{Z}\left( X\right)
&=&\sum\limits_{X^{\prime }}T\left( X^{\prime },X\right) \overline{\Psi }%
_{Z}\left( X^{\prime }\right)
\end{eqnarray}%
parametrized by an $m$-tuple of complex parameters $Z=\left\{ z_{1},\cdots
,z_{m}\right\} .$ The corresponding eigenvalue is expressed in terms of
these parameters,
\begin{equation}
\Lambda \left( Z\right) =\prod\limits_{i=1}^{m}\left( 1-p+p/z_{i}\right) ,
\label{Lambda(Z)}
\end{equation}%
and the eigenvectors are given by the following determinants
\begin{eqnarray}
\Psi _{Z}\left( X\right) &=&\det \left( z_{i}^{x_{j}}(1-\kappa
z_{i})^{i-j}\right) _{1\leq i,j\leq m}, \\
\overline{\Psi }_{Z}\left( X\right) &=&\det \left( z_{i}^{-x_{j}}(1-\kappa
z_{i})^{j-i}\right) _{1\leq i,j\leq m}.
\end{eqnarray}%
It is not difficult to check that these two eigenfunctions can be used to
construct the resolution of the identity operator
\begin{equation}
\frac{1}{m!}\oint \Psi _{Z}\left( X\right) \overline{\Psi }_{Z}\left(
X^{\prime }\right) \prod\limits_{i=1}^{m}\frac{dz_{i}}{2\pi \mathrm{i}z_{i}}%
=\delta _{X,X^{\prime }},  \label{resolution}
\end{equation}%
where the integration over each $z_{i},i=1,\ldots ,m$, is along a contour of
integration which has to satisfy the requirement that the pole of the wave
function at $z=1/\kappa $ has to lie in the exterior. Then the solution of
the initial value problem for the master equation is given by%
\begin{equation}
P_{t}\left( X,X^{0}\right) =\frac{1}{m!}\oint \Lambda ^{t}\left( Z\right)
\Psi _{Z}\left( X\right) \overline{\Psi }_{Z}\left( X^{0}\right)
\prod\limits_{i=1}^{m}\frac{dz_{i}}{2\pi iz_{i}}.
\label{transition prob integral}
\end{equation}%
Finally we end up with the following integral expression for the transition
probability 
\begin{eqnarray}
P_{t}\left( X,X^{0}\right) &=&\oint \Lambda ^{t}\left( Z\right)
\prod\limits_{i=1}^{m}\left[ \frac{z_{i}^{x_{i}-x_{m}^{0}}}{(1-\kappa
z_{i})^{i-1}}\right]  \label{int} \\
&&\times \det \left( z_{i}^{x_{m}^{0}-x_{j}^{0}}(1-\kappa
z_{i})^{j-1}\right) _{1\leq i,j\leq m}\prod\limits_{i=1}^{p}\frac{dz_{i}}{%
2\pi \mathrm{i}z_{i}}.  \nonumber
\end{eqnarray}%
The integration can be easily performed by counting the residues. The result
is a determinant of an $m\times m$ matrix of the form similar to the one
obtained for the discrete time TASEP with backward update ( \cite{Brankov
Priezzhev Shelest} , \cite{Rakos Schuetz}). Note that in the case of vicious
walkers, $\kappa =0$, the eigenfunctions are of free fermion type%
\begin{eqnarray}
\Psi _{Z}\left( X\right) &=&\det \left( z_{i}^{x_{j}}\right) _{1\leq i,j\leq
m}, \\
\overline{\Psi }_{Z}\left( X\right) &=&\det \left( z_{i}^{-x_{j}}\right)
_{1\leq i,j\leq m}.
\end{eqnarray}%
and the integration yields the famous Lindstr\"{o}m-Gessel-Viennot theorem
\cite{Lindstrom}, \cite{Gessel Viennot}.
\begin{equation}
P_{t}\left( X,X^{0}\right) =\det \left[ F_{0}\left( x_{i}-x_{j}^{0},t\right) %
\right] _{1\leq i,j\leq m},  \label{Gessel Viennot}
\end{equation}%
where%
\begin{equation}
F_{0}\left( x,t\right) =p^{x}\left( 1-p\right) ^{t-x}\left(
\begin{array}{c}
t \\
x%
\end{array}%
\right) ,
\end{equation}%
These formulas serve as a starting point for the asymptotical analysis of
the survival probability.

\section{Asymptotic form of the survival probability}

To obtain the survival probability $\mathcal{P}_{t}\left( X^{0}\right) $ for
SVW we have to sum the transition probability $P_{t}\left( X,X^{0}\right) $
over the set of all final configurations $X$:
\begin{equation}
\mathcal{P}_{t}\left( X^{0}\right) =\sum\limits_{\left\{ X\right\}
}P_{t}\left( X,X^{0}\right) .  \label{survival probability def}
\end{equation}%
We solve this problem in the limit $t\rightarrow \infty $. For pedagogical
reasons we first outline the derivation for the VW model, which simply
reproduce Rubey's results, \cite{Rubey}. The procedure we use amounts to an
asymptotical analysis of the expression for $P_{t}\left( X,X^{0}\right) $ by
means of the saddle point approximation for the integral (\ref{transition
prob integral}), which reduces the sum over final configurations to known
integrals. The main ingredients of the derivation for the VW model are then
applied similarly to the SVW model but with some modifications.

\subsection{Vicious walkers}

In the case of VW\ ($\kappa =0$), the integral (\ref{int}) takes the form
\begin{equation}
\oint \Lambda ^{t}\left( Z\right)
\prod\limits_{k=1}^{m}z_{k}^{x_{k}-x_{m}^{0}}\det
(z_{i}^{x_{m}^{0}-x_{j}^{0}})_{1\leq i,j\leq m}\prod\limits_{l=1}^{m}\frac{%
dz_{l}}{2\pi \mathrm{i}z_{l}}.  \label{viciouswalkerint}
\end{equation}%
Here, the determinant under the integral can be expressed in terms of the
Vandermonde determinant
\begin{equation}
\Delta \left( Z\right) \equiv \det (z_{i}^{m-j})_{1\leq i,j\leq
m}=\prod\limits_{1\leq i<j\leq m}\left( z_{i}-z_{j}\right) ,
\end{equation}%
and the Schur function \cite{Macdonald}
\begin{equation}
s_{\chi }(z_{1},\ldots ,z_{m})\equiv \det \left( z_{i}^{\chi
_{j}+m-j}\right) /\Delta \left( Z\right)
\end{equation}%
parametrized by the partition $\chi =\left( \chi _{1}\geq \chi _{2}\geq
\cdots \geq \chi _{m}\geq 0\right) $ defined by
\begin{equation}
\chi =(x_{m}^{0}-x_{1}^{0}-m+1,x_{m}^{0}-x_{2}^{0}-m+2,\cdots ),  \label{khi}
\end{equation}%
as follows:
\begin{equation}
\det (z_{i}^{x_{m}^{0}-x_{i}^{0}})_{1\leq i,j\leq m}=\Delta \left( Z\right)
s_{\chi }\left( Z\right) .
\end{equation}

Thus (\ref{viciouswalkerint}) can be rewritten in the following form
\begin{equation}
P_{T}\left( X,X^{0}\right) =\oint \Delta \left( Z\right) s_{\chi }\left(
Z\right) \prod\limits_{i=1}^{m}e^{th_{i}\left( z_{i}\right) }\frac{dz_{i}}{%
2\pi \mathrm{i}z_{i}},
\end{equation}%
where
\begin{equation}
h_{i}(z)=\log \left( 1-p+p/z\right) +v_{i}\log z  \label{f_i(z)}
\end{equation}%
and
\begin{equation}
v_{i}=\frac{x_{i}-x_{m}^{0}}{t}.
\end{equation}%
Now we are ready to estimate the integral asymptotically as $t\rightarrow
\infty $. We assume that the differences $\left( x_{i}^{0}-x_{j}^{0}\right) $
are kept bounded for any $i$ and $j$. The saddle point of the function under
the integral is defined by the equation
\begin{equation}
h_{i}^{\prime }(z_{i}^{\ast })=0,
\end{equation}%
which yields
\begin{equation}
z_{i}^{\ast }=\frac{\left( 1-v_{i}\right) p}{\left( 1-p\right) v_{i}}.
\label{z^* saddle point}
\end{equation}%
In the vicinity of the saddle point $h_{i}(z)$ has an expansion%
\begin{eqnarray}
h_{i}(z_{i}^{\ast }+\xi ) &=&\log \left[ \left( \frac{1-p}{1-v_{i}}\right)
^{1-v_{i}}\left( \frac{p}{v_{i}}\right) ^{v_{i}}\right] \\
&&+\frac{1}{2}\left( \frac{1-p}{p}\right) ^{2}\frac{v_{i}^{3}}{1-v_{i}}\xi
^{2}+O(\xi ^{3}).  \nonumber
\end{eqnarray}%
The integration contours can be deformed to a circle centered at 0, crossing
the real axis at $z_{i}^{\ast }$. Writing points on the circle as $%
z_{i}=z_{i}^{\ast }e^{\mathrm{i}\phi _{i}}$, we have
\begin{equation}
\Re (h_{i}(z_{i}))=h(z_{i}^{\ast })+\frac{1}{2}\log \left[ (1-v_{i}(1-\cos
(\phi _{i}))^{2}+v_{i}^{2}\sin ^{2}(\phi _{i})\right] .
\end{equation}%
It follows that there is a single maximum at $\phi _{i}=0$. Moreover, since
all derivatives are bounded provided $v_{i}<0$, the saddle-point
approximation holds uniformly in $x_{i}$. (Note that (\ref{viciouswalkerint}%
) is zero if $v_{i}>1$, and if $v_{m}=1$ then the probability $p^{t}$ can be
extracted as a factor, the remaining integral over $z_{1},\dots ,z_{m-1}$
being of the same form.) It is easy to see that the contribution to the sum
over $X$ from points with $v_{i}>p+\epsilon $ for any fixed $\epsilon >0$ is
negligible in the limit $t\rightarrow \infty $. (Below, we shall see that
the effective range of the summation is in fact even smaller.) The
saddle-point approximation \cite{Fedoryuk} now yields
\begin{eqnarray}
&&P_{t}\left( X,X^{0}\right) =\left( \frac{p}{1-p}\right) ^{\frac{m(m-1)}{2}%
}\prod\limits_{i=1}^{m}\frac{v_{i}^{1-m}e^{th_{i}\left( \frac{\left(
1-v_{i}\right) p}{\left( 1-p\right) v_{i}}\right) }}{\sqrt{2\pi
tv_{i}(1-v_{i})}}\times  \label{P_T(X,X^0)_1} \\
&&\prod\limits_{1\leq i<j\leq m}\left( v_{j}-v_{i}\right) s_{\chi }\left(
\frac{\left( 1-v_{1}\right) p}{\left( 1-p\right) v_{1}},\ldots ,\frac{\left(
1-v_{m}\right) p}{\left( 1-p\right) v_{m}}\right) \left( 1+O\left( \frac{1}{t%
}\right) \right) .  \nonumber
\end{eqnarray}%
The next step is to perform the summation (\ref{survival probability def})
over the range of the final configurations $X\in \left\{ x_{1}^{0}\leq
x_{1}<\cdots <x_{m}<\infty \right\} $. For this we need to demonstrate that (%
\ref{P_T(X,X^0)_1}) holds uniformly in $X$. To this end we first show that
the main contribution to the sum comes from the domain
\begin{equation}
pt-\sqrt{t}\log t\leq x_{1}<\cdots <x_{m}\leq \ pt+\sqrt{t}\log t.
\label{summation range}
\end{equation}%
Indeed, $h_{i}\left( \frac{\left( 1-v_{i}\right) p}{\left( 1-p\right) v_{i}}%
\right) $ is a concave function of $v_{i}$ in the domain $v_{i}\in (0,1)$
with a single maximum $v_{i}=p$. It follows then for $\left\vert
x_{i}-pt\right\vert >\sqrt{t}\log t$
\begin{equation}
e^{th_{i}\left( \frac{\left( 1-v_{i}\right) p}{\left( 1-p\right) v_{i}}%
\right) }<e^{th_{i}\left( \frac{1-\sqrt{t}\log t/(1-p)}{1+\sqrt{t}\log t/p}%
\right) }=e^{-\frac{\left( \log t\right) ^{2}}{2p\left( 1-p\right) }}\left[
1+O\left( \frac{\log t}{\sqrt{t}}\right) \right]
\end{equation}%
All the other factors in (\ref{P_T(X,X^0)_1}) are at most of polynomial
order in $t$, while the total number of nonzero terms in the sum of interest
(\ref{survival probability def}) is $O\left( t^{m}\right) $. Therefore, the
contribution from the complement of (\ref{summation range}) being of order
of $O\left( t^{s}e^{-\frac{\left( \log t\right) ^{2}}{2p\left( 1-p\right) }%
}\right) $ for some constant $s$ is asymptotically negligible compared to
the contribution from (\ref{summation range}).


In the latter one can approximate the function $h_{i}\left( \frac{\left(
1-v_{i}\right) p}{\left( 1-p\right) v_{i}}\right) $ by the second term of
its Taylor expansion at $v_{i}=p$, which yields
\begin{eqnarray}
P_{t}\left( X,X^{0}\right) &=&\frac{1}{(2\pi )^{\frac{m}{2}}\left(
tp(1-p)\right) ^{\frac{m^{2}}{2}}}s_{\chi }\left( 1,\ldots ,1\right) \\
&&\hspace{-2cm}\prod\limits_{i=1}^{m}\exp \left( -\frac{\left(
x_{i}-x_{m}^{0}-pt\right) ^{2}}{2tp\left( 1-p\right) }\right)
\prod\limits_{1\leq i<j\leq m}\left( x_{j}-x_{i}\right) \left[ 1+O\left(
\frac{\left( \log t\right) ^{3}}{\sqrt{t}}\right) \right] .  \nonumber
\label{P_t(X,X)_2}
\end{eqnarray}%
We now have to evaluate following sum in the limit $t\rightarrow \infty $,
\begin{equation}
\sum_{-\sqrt{t}\log t\leq x_{1}<\cdots <x_{m}\leq \ \sqrt{t}\log
t}\prod\limits_{i=1}^{m}e^{-\frac{x_{i}^{2}}{2tp\left( 1-p\right) }%
}\prod\limits_{1\leq i<j\leq m}\left( x_{j}-x_{i}\right) .  \label{sum_VW}
\end{equation}%
This can be done by means of the following lemma:

\begin{lemma}
\label{sum-to-integral} Let $h:\mathbb{R}^{m}\rightarrow \mathbb{R}$ be a
twice differentiable function of at most polynomial growth. Then as $\delta
\rightarrow 0$,
\begin{eqnarray}
&&\delta ^{m}\sum\limits_{{y_{1}<\dots <y_{m}};\,{y_{i}\in \delta \mathbb{Z}}%
}h(y_{1},\dots ,y_{n})\prod_{i=1}^{m}e^{-\frac{1}{2}y_{i}^{2}} \\
&=&\int_{-\infty }^{\infty }dy_{1}\int_{y_{1}}^{\infty }dy_{2}\dots
\int_{y_{m-1}}^{\infty }dy_{m}h(y_{1},\dots ,y_{m})e^{-\frac{1}{2}%
\sum_{i=1}^{m}y_{i}^{2}}+O(\delta ).  \nonumber
\end{eqnarray}
\end{lemma}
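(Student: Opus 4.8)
The statement is a Riemann-sum-to-integral convergence result, and the Gaussian factor is what makes everything integrable and renders the error $O(\delta)$ despite the polynomial growth of $h$. I would set $g(\mathbf{y})=h(y_1,\dots,y_m)\prod_{i=1}^m e^{-\frac12 y_i^2}$, so that the left-hand side is $\delta^m\sum_{\mathbf{y}} g(\mathbf{y})$ over ordered lattice points $\mathbf{y}\in\delta\mathbb{Z}^m$ with $y_1<\dots<y_m$, and the right-hand side is $\int_W g$ over the open Weyl chamber $W=\{y_1<\dots<y_m\}$. To each such lattice point I attach the half-open cube $Q_{\mathbf{y}}=\prod_{i=1}^m[y_i,y_i+\delta)$, whose volume is $\delta^m$. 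Since $h$ has at most polynomial growth, $g$ and (in the application, where $h$ is essentially a polynomial) its first and second derivatives are all bounded by a polynomial times a Gaussian, hence absolutely integrable over $\mathbb{R}^m$; this integrability is the mechanism controlling all error terms.

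\textbf{Boundary step.} First I would compare the union $U_\delta=\bigcup_{\mathbf{y}}Q_{\mathbf{y}}$ (over ordered lattice points) with the chamber $W$. If $\mathbf{y}$ is strictly ordered then $y_{i+1}\ge y_i+\delta$, so for any $\mathbf{z}\in Q_{\mathbf{y}}$ one has $z_i<y_i+\delta\le y_{i+1}\le z_{i+1}$; thus $U_\delta\subseteq W$. Conversely, a point $\mathbf{z}\in W$ fails to be covered only when two of its coordinates fall in the same bin, i.e. $\lfloor z_i/\delta\rfloor=\lfloor z_{i+1}/\delta\rfloor$ for some $i$, which forces $0<z_{i+1}-z_i<\delta$. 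Hence $W\setminus U_\delta$ lies in the $\delta$-collar $\bigcup_i\{0<y_{i+1}-y_i<\delta\}$ of the diagonal hyperplanes, whose Gaussian-weighted measure is $O(\delta)$ because $|g|$ is integrable. Therefore $\int_{U_\delta} g=\int_W g+O(\delta)$.

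\textbf{Local Taylor step.} On each cube I expand $g$ about its corner $\mathbf{y}$ to second order: for $\mathbf{z}\in Q_{\mathbf{y}}$, $g(\mathbf{z})=g(\mathbf{y})+\nabla g(\mathbf{y})\cdot(\mathbf{z}-\mathbf{y})+O\!\big(\delta^2\sup_{Q_{\mathbf{y}}}|D^2 g|\big)$. Integrating over $Q_{\mathbf{y}}$ gives
\[
\int_{Q_{\mathbf{y}}} g=\delta^m g(\mathbf{y})+\tfrac{\delta}{2}\,\delta^m\sum_{i=1}^m\partial_i g(\mathbf{y})+O\!\big(\delta^{m+2}\sup_{Q_{\mathbf{y}}}|D^2 g|\big).
\]
Summing over the ordered lattice points, the first-order correction is $\tfrac{\delta}{2}\sum_i\big[\delta^m\sum_{\mathbf{y}}\partial_i g(\mathbf{y})\big]$, in which each bracket is dominated by the Riemann sum of $|\partial_i g|$ and so stays bounded as $\delta\to0$; hence this term is $O(\delta)$. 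The remainder sum is $\delta^2\cdot\delta^m\sum_{\mathbf{y}}\sup_{Q_{\mathbf{y}}}|D^2 g|$, an upper Riemann sum of the integrable function $|D^2 g|$, hence $O(\delta^2)$. Thus $\int_{U_\delta} g=\delta^m\sum_{\mathbf{y}}g(\mathbf{y})+O(\delta)$, and combining with the boundary step yields $\delta^m\sum_{\mathbf{y}}g(\mathbf{y})=\int_W g+O(\delta)$, which is the claim.

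\textbf{Main obstacle.} The delicate point is making the error estimates uniform in the presence of the polynomially growing $h$: one must ensure that the Riemann sums of $|\nabla g|$ and $|D^2 g|$ (not of $g$ alone) are bounded uniformly in $\delta$, which follows from the Gaussian suppression, and that the thin collar $W\setminus U_\delta$ contributes only $O(\delta)$ to the weighted measure. If one does not want to assume polynomial bounds on the derivatives of $h$, the same conclusion can be reached by first truncating the sum and integral to a box $\{|y_i|\le R\}$, on which $h\in C^2$ is bounded together with its derivatives, and then letting the Gaussian tails outside the box be absorbed into the $O(\delta)$ (indeed into an exponentially small error) after choosing $R$ comparable to a slowly growing function of $1/\delta$, exactly as in the tail-truncation used for the range (\ref{summation range}) above.
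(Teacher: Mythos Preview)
Your proof is correct and follows essentially the same strategy as the paper's: partition the chamber into cubes indexed by the ordered lattice points, use a Taylor/mean-value estimate on each cube together with the Gaussian decay to control the sum-versus-integral discrepancy by $O(\delta)$, and then observe that the leftover boundary region near the diagonal hyperplanes has width $\delta$ and hence contributes $O(\delta)$. The only cosmetic differences are that the paper uses centred cubes $\{\max_i|x_i-y_i|\le\delta/2\}$ and a first-order mean-value bound, whereas you use corner cubes $\prod_i[y_i,y_i+\delta)$ and carry the expansion to second order; both handle the implicit need for polynomial growth of $\partial_i h$ in the same (slightly informal) way.
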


\begin{proof}
We subdivide the domain $-\infty <x_{1}<\dots <x_{m}<\infty $ into
hypercubes of the form
\begin{equation}
B_{\delta }(y_{1},\dots ,y_{m})=\{(x_{1},\dots ,x_{m}):\,\max
|x_{i}-y_{i}|\leq \frac{\delta }{2}\},
\end{equation}%
where $y_{i}\in \delta \mathbb{Z}$ and $y_{1}<y_{2}<\dots <y_{m}$. The
remaining region is small and its contribution will be estimated shortly. We
then write, for $(x_{1},\dots ,x_{m})\in B_{\delta }(y_{1},\dots ,y_{m})$,
\begin{eqnarray}
&&\left\vert h(x_{1},\dots ,x_{m})e^{-\frac{1}{2}(x_{1}^{2}+\dots
+x_{m}^{2})}-h(y_{1},\dots ,y_{m})e^{-\frac{2}{2}(y_{1}^{2}+\dots
+y_{m}^{2})}\right\vert \\
&\leq &\sup_{(u_{1},\dots ,u_{m})\in B_{\delta }(y_{1},\dots
,y_{m})}\max_{i=1}^{m}\left\vert \frac{\partial }{\partial u_{i}}%
h(u_{1},\dots ,u_{m})e^{-\frac{1}{2}(u_{1}^{2}+\dots +u_{m}^{2})}\right\vert
\,|x_{i}-y_{i}|.  \nonumber
\end{eqnarray}%
Now,
\begin{eqnarray}
&&\frac{\partial }{\partial u_{i}}h(u_{1},\dots ,u_{m})e^{-\frac{1}{2}%
(u_{1}^{2}+\dots +u_{m}^{2})} \\
&=&\left( \frac{\partial }{\partial u_{i}}h(u_{1},\dots
,u_{m})-u_{i}h(u_{1},\dots ,u_{m})\right) e^{-\frac{1}{2}(u_{1}^{2}+\dots
+u_{m}^{2})}  \nonumber
\end{eqnarray}%
which is easily seen to be bounded by $Ce^{-(y_{1}^{2}+\dots +y_{m}^{2})/2}$
for some constant $C>0$. It follows from the convergence of the sum $%
\sum_{y\in \delta \mathbb{Z}}\delta e^{-y^{2}/2}$ uniformly in $\delta $
that the difference between the integral over the region
\begin{equation}
\bigcup_{y_{1}<\dots <y_{m};\,y_{i}\in \delta \mathbb{Z}}B_{\delta
}(y_{1},\dots ,y_{m})
\end{equation}%
and the sum is of order $\delta $. There remains the integral over the
complementary region, but this is obviously of order $\delta $ as the
integral converges and the region has width $\delta $.
\end{proof}


Then, after going to rescaled variables $y_{i}=x_{i}/\sqrt{tp\left(
1-p\right) }$ and writing $\delta =1/\sqrt{tp(1-p)}$ the sum (\ref{sum_VW})
reduces to the integral
\begin{equation}
\left( tp\left( 1-p\right) \right) ^{\frac{m(m+1)}{4}}\int\limits_{-\infty
}^{\infty }dx_{1}\cdots \int\limits_{x_{m-2}}^{\infty
}dx_{m-1}\int\limits_{x_{m-1}}^{\infty }dx_{m}\prod\limits_{i=1}^{m}e^{-%
\frac{1}{2}x_{i}^{2}}\prod\limits_{1\leq i<j\leq m}\left\vert
x_{j}-x_{i}\right\vert
\end{equation}%
(the range of summation is extended to ($-\infty \leq x_{1}<\cdots
<x_{m}\leq \infty $) by the same argument as above. Note that the absolute
value signs $\left\vert x_{j}-x_{i}\right\vert $, though redundant in this
range, are nevertheless useful as they make the expression symmetric with
respect to permutations of the variables $x_{1},\ldots ,x_{m}$. One, then,
can use this fact to extend the integration to the whole $\mathbb{R}^{m}$,
which yields an additional factor of $m!$, which has to be compensated in
the end. As a result we obtain
\begin{equation}
\mathcal{P}_{t}\left( X^{0}\right) =\frac{1}{\left[ p(1-p)t\right] ^{m\left(
m-1\right) /4}}\frac{I_{m,1/2}}{\left( 2\pi \right) ^{m/2}m!}s_{\chi }\left(
1,\ldots ,1\right) \left[ 1+O\left( \frac{\left( \log t\right) ^{3}}{\sqrt{t}%
}\right) \right]
\end{equation}%
where
\begin{eqnarray}
I_{m,k} &\equiv &\int\limits_{-\infty }^{\infty }dy_{p}\cdots
\int\limits_{-\infty }^{\infty }dy_{2}\int\limits_{-\infty }^{\infty
}dy_{1}\exp \left( -\frac{1}{2}\sum\nolimits_{i=1}^{m}y_{i}^{2}\right)
\nonumber \\
&&\hspace{5cm}\times \prod\limits_{1\leq i<j\leq m}\left\vert
y_{j}-y_{i}\right\vert ^{2k}  \nonumber \\
&=&\left( 2\pi \right) ^{m/2}\prod\limits_{l=1}^{m}\frac{\Gamma \left(
lk+1\right) }{\Gamma (k+1)}  \label{Mehta}
\end{eqnarray}%
is the Mehta integral \cite{Mehta Dyson}, which first appeared in the
context of Gaussian random matrix ensembles. Finally, one can use the
following formula for the Schur function \cite{Macdonald}%
\begin{equation}
s_{\chi }\left( 1,\ldots ,1\right) =\prod\limits_{1\leq i<j\leq m}\frac{\chi
_{i}-i-\chi _{j}+j}{j-i},
\end{equation}%
resulting in the following expression for the survival probability:%
\begin{eqnarray}
\mathcal{P}_{t}\left( X^{0}\right) &=&\frac{1}{\left[ p(1-p)t\right]
^{m\left( m-1\right) /4}}\frac{2^{m}}{\pi ^{m/2}}\prod\limits_{l=1}^{m}\frac{%
\Gamma \left( l/2+1\right) }{l!}  \label{P^vw} \\
&&\times \prod\limits_{1\leq i<j\leq m}\left( x_{j}^{0}-x_{i}^{0}\right)
\left[ 1+O\left( \frac{\left( \log t\right) ^{3}}{\sqrt{t}}\right) \right]
\nonumber
\end{eqnarray}%
After reexpressing the gamma functions in terms of factorials we obtain the
form given in (\ref{A(X^0;0)}). \medskip

\subsection{Semi-vicious walkers}

\subsubsection{The case of generic $\protect\kappa \neq 1$}

To study the asymptotic behaviour of the survival probability for the case
of general $\kappa $, one can start with the following integral
representation for the transition probability
\begin{eqnarray}
P_{t}\left( X,X^{0}\right) &=&\prod\limits_{i=1}^{m}\oint_{C_{0}}\frac{dz_{i}%
}{2\pi \mathrm{i}z_{i}}\frac{d\xi _{i}}{2\pi \mathrm{i}\xi _{i}}\left( \frac{%
1-\kappa \xi _{i}}{1-\kappa z_{i}}\right) ^{i-1}\xi
_{i}^{x_{m}^{0}-x_{i}^{0}+1}z_{i}^{x_{i}-x_{m}^{0}}  \nonumber \\
&&\times \Lambda ^{t}\left( Z\right) \prod\limits_{1\leq i,j\leq m}\frac{1}{%
\xi _{i}-z_{j}}\prod\limits_{1\leq i<j\leq m}(z_{i}-z_{j})(\xi _{j}-\xi
_{i}),  \label{doubleintegral}
\end{eqnarray}%
where the integration in each variable is along a small circle around zero, $%
\left\vert z_{i}\right\vert <\left\vert \xi _{j}\right\vert $ for any $%
i,j=1,\ldots ,m$. This representation can be reduced to the form (\ref{int})
by direct integration over each $\xi _{j}$ ($j=1,\ldots ,m$). This is done
by summing the contributions to the integral coming from all the poles $\xi
_{j}=z_{i}$, $i=1,\ldots ,m$.

Though the most of analysis of the large $t$ asymptotics of this expression
is similar to the one for VW, one important difference exists. The
expressions under the integrals over $z_{i}$, $i=2,\ldots ,m$, \ $\ $have
singularities at $z_{i}=1/\kappa $, the poles of the form $\left( 1-\kappa
z_{i}\right) ^{1-i}$, which can be located between the origin and the saddle
point. In this case the contour being deformed to the steepest descent one,
crosses this singularity and its contribution must then be extracted from
the saddle point contribution. While for $\left\vert \kappa \right\vert <1$
this does not affect the asymptotics of the sum over $x_{1},\ldots ,x_{m}$,
evaluated subsequently, for $\left\vert \kappa \right\vert >1$ its
contribution turns out to be dominant.

It is, however, technically difficult to calculate the residue at the
multiple pole of the complicated expression. To avoid this calculation and
to evaluate both cases in one go, we expand the term $\left( 1-\kappa
z_{i}\right) ^{1-i}$ into a series in powers of $(\kappa z_{i})$ and then
integrate it term by term in the saddle point approximation. As a result we
obtain $P_{t}\left( X,X^{0}\right) $ in the form of an $(m-1)$ - fold series
\begin{eqnarray}
&&\hspace{-0.5cm}P_{t}\left( X,X^{0}\right)
=\prod\limits_{k=1}^{m}\oint_{C_{0}}\frac{d\xi _{i}}{2\pi \mathrm{i}\xi _{k}}%
\left( 1-\kappa \xi _{k}\right) ^{i-1}\xi _{k}^{x_{m}^{0}-x_{i}^{0}+1}
\label{P^svw_T(X,X0)} \\
&&\prod\limits_{1\leq i<j\leq m}(\xi _{j}-\xi _{i})\sum_{\left\{
n_{2},\ldots ,n_{m}\right\} \in \mathbb{Z}_{\geq 0}^{m-1}}\mathcal{A}\left(
\left\{ \xi _{i},v_{i}\right\} _{i=1}^{m},\left\{ n_{k}\right\}
_{i=2}^{m},\right)  \nonumber
\end{eqnarray}%
where%
\begin{eqnarray}
&&\mathcal{A}\left( \left\{ \xi _{i},v_{i}\right\} _{i=1}^{m},\left\{
n_{k}\right\} _{i=2}^{m}\right) =  \nonumber \\
&&\left( \frac{p}{1-p}\right) ^{\frac{m(m-1)}{2}}\prod\limits_{i=2}^{m}%
\kappa ^{n_{i}}\left(
\begin{array}{c}
i+n_{i}-2 \\
n_{i}%
\end{array}%
\right)  \label{A} \\
&&\times \prod\limits_{i=1}^{m}\frac{v_{i}^{1-m}e^{th_{i}\left( \frac{\left(
1-v_{i}\right) p}{\left( 1-p\right) v_{i}}\right) }}{\sqrt{2\pi
tv_{i}(1-v_{i})}}\prod\limits_{1\leq i<j\leq m}\left( v_{j}-v_{i}\right)
\nonumber \\
&&\times \prod\limits_{1\leq i,j\leq m}\left( \xi _{i}-\frac{\left(
1-v_{j}\right) p}{\left( 1-p\right) v_{j}}\right) ^{-1}\left( 1+O\left(
\frac{1}{t}\right) \right) ,  \nonumber
\end{eqnarray}%
\begin{eqnarray}
v_{1} &=&\frac{x_{1}-x_{m}^{0}}{t}  \label{v_i kappa neq 1} \\
v_{i} &=&\frac{x_{i}-x_{m}^{0}+n_{i}}{t},i=2,\ldots ,m.  \nonumber
\end{eqnarray}%
The next step is to use this approximation to perform the summation of (\ref%
{P^svw_T(X,X0)}) over the domain $\{x_{1}^{0}<x_{1}<x_{2}<\cdots
<x_{m}<\infty \}$. The effective range of this summation depends crucially
on the behaviour of the other sum in $n_{2},\ldots ,n_{m}$. Namely, the
effective summation range is different depending on whether the value of $%
\kappa $ is greater or less than one, when the term $\kappa ^{n_{i}}$ is
decreasing or increasing respectively. We consider these two cases
separately.

\paragraph{The case $\left\vert \protect\kappa \right\vert <1$}

Here $\kappa $ takes arbitrary complex values in the domain $\left\vert
\kappa \right\vert <1$. As in the case of vicious walkers, we argue that the
exponential part $\exp \left[ th_{i}\left( \frac{\left( 1-v_{i}\right) p}{%
\left( 1-p\right) v_{i}}\right) \right] $ makes the whole expression
negligible beyond the range
\begin{equation}
p-t^{-1/2}\log t\leq v_{i}\leq p+t^{-1/2}\log t.  \label{v range}
\end{equation}%
In addition, for $n_{i}>\left( \log t\right) ^{2}/\left\vert \log \left\vert
\kappa \right\vert \right\vert $ we have
\begin{equation}
\kappa ^{n_{i}}\left(
\begin{array}{c}
i+n_{i}-2 \\
n_{i}%
\end{array}%
\right) =O(t^{-\log t}\left( \log t\right) ^{2(i-2)}),
\end{equation}%
so that we can limit the summation over $n_{i}$ to $n_{i}\leq \left( \log
t\right) ^{2}/\left\vert \log \left\vert \kappa \right\vert \right\vert $.
Therefore, $n_{i}$ is negligible compared to $x_{i}$ in the domain (\ref{v
range}) and can be neglected in the definition (\ref{v_i kappa neq 1}) of $%
v_{i}$. In the range (\ref{v range}) we can approximate $\mathcal{A}\left(
\left\{ \xi _{i},x_{i}/t\right\} _{i=1}^{m},\left\{ n_{k}\right\}
_{i=2}^{m},\right) $ by the leading term of its Taylor expansion at $v_{i}=p$%
, $i=1,\ldots ,m$.%
\begin{eqnarray}
&&\mathcal{A}\left( \left\{ \xi _{i},x_{i}/t\right\} _{i=1}^{m},\left\{
n_{k}\right\} _{i=2}^{m}\right) =  \nonumber \\
&&\left( \frac{1}{tp\left( 1-p\right) }\right) ^{\frac{m(m-1)}{2}%
}\prod\limits_{i=2}^{m}\kappa ^{n_{i}}\left(
\begin{array}{c}
i+n_{i}-2 \\
n_{i}%
\end{array}%
\right) \\
&&\times \prod\limits_{i=1}^{m}\frac{e^{-\frac{\left( x_{i}-pt\right) ^{2}}{%
2p(1-p)t}}}{\sqrt{2\pi tp(1-p)}}\prod\limits_{1\leq i<j\leq m}\left(
x_{j}-x_{i}\right)  \nonumber \\
&&\times \prod\limits_{i=1}^{m}\left( \xi _{i}-1\right) ^{-m}\left(
1+O\left( \frac{\left( \log t\right) ^{3}}{\sqrt{t}}\right) \right)
\nonumber
\end{eqnarray}%
One can see that the terms dependent on $\left\{ x_{i}\right\} $ and $%
\left\{ n_{i}\right\} $ decouple and the terms dependent on $n_{2},\ldots
,n_{m}$ can be summed up.
\begin{equation}
\sum_{\left\{ n_{2},\ldots ,n_{m}\right\} \in \mathbb{Z}_{\geq
0}^{m-1}}\prod\limits_{i=2}^{m}\kappa ^{n_{i}}\left(
\begin{array}{c}
i+n_{i}-2 \\
n_{i}%
\end{array}%
\right) =\left( 1-\kappa \right) ^{-\frac{m(m-1)}{2}}
\end{equation}%
The remaining sum over $x_{i}$ $\,$for $i=1,\ldots ,m$ is transformed to an
integral using Lemma \ref{sum-to-integral}:
\begin{eqnarray}
&&\sum\limits_{x_{1}^{0}<x_{1}<x_{2}<\cdots <x_{m}}\sum_{\left\{
n_{2},\ldots ,n_{m}\right\} \in \mathbb{Z}_{\geq 0}^{m-1}}\mathcal{A}\left(
\left\{ \xi _{i},x_{i}/t\right\} _{i=1}^{m},\left\{ n_{k}\right\}
_{i=2}^{m}\right)  \label{sum A} \\
&=&\left( tp\left( 1-p\right) \right) ^{-\frac{m(m-1)}{4}}\left( 2\pi
\right) ^{-\frac{m}{2}}\left( 1-\kappa \right) ^{-\frac{m(m-1)}{2}%
}\prod\limits_{i=1}^{m}\left( \xi _{i}-1\right) ^{-m}  \nonumber \\
&&\times \int\limits_{-\infty }^{\infty
}dy_{1}e^{-y_{1}^{2}/2}\int\limits_{y_{1}}^{\infty
}dy_{2}e^{-y_{1}^{2}/2}\cdots \int\limits_{y_{m-1}}^{\infty
}dy_{m}e^{-y_{1}^{2}/2}\prod\limits_{1\leq i<j\leq m}\left(
y_{j}-y_{i}\right)  \nonumber \\
&&\times \left[ 1+O\left( \frac{\left( \log t\right) ^{3}}{\sqrt{t}}\right) %
\right] .  \nonumber
\end{eqnarray}%
Combining (\ref{survival probability def}), (\ref{P^svw_T(X,X0)}) and (\ref%
{sum A}) we obtain
\begin{eqnarray}
\mathcal{P}_{t}\left( X^{0}\right) &=&\frac{1}{\left[ p(1-p)t\right]
^{m\left( m-1\right) /4}}\frac{I_{m,1/2}}{\left( 2\pi \right) ^{\frac{m}{2}%
}m!}  \nonumber  \label{P^swv_0} \\
&\times &\prod\limits_{i=1}^{m}\oint_{C_{\left\vert \xi _{i}\right\vert
=r>1}}\frac{d\xi _{i}}{2\pi \mathrm{i}\xi _{i}}\left( \frac{1-\kappa \xi _{i}%
}{1-\kappa }\right) ^{i-1}\prod\limits_{1\leq i<j\leq m}(\xi _{j}-\xi _{i})
\nonumber \\
&\times &\prod\limits_{i=1}^{m}\left( \xi _{i}-1\right) ^{-m}\xi
_{i}^{x_{m}^{0}-x_{i}^{0}+1}\left[ 1+O\left( \frac{\left( \log t\right) ^{3}%
}{\sqrt{t}}\right) \right] ,
\end{eqnarray}%
where $I_{m,1/2}$ is the Mehta integral defined in (\ref{Mehta}). Writing
the above product of integrals in determinant form and using the definition
of $I_{m,1/2}$ we arrive at the final result
\begin{eqnarray}
\mathcal{P}_{t}\left( X^{0}\right) &\simeq &\frac{2^{m}}{\pi ^{m/2}}\left[
p(1-p)t\right] ^{-\frac{m\left( m-1\right) }{4}}\left( 1-\kappa \right) ^{-%
\frac{m\left( m-1\right) }{2}}  \nonumber  \label{P^svw} \\
&&\times \prod\limits_{l=1}^{m}\Gamma \left( l/2+1\right) \det \left[ \left(
g_{i,j}(x_{m}^{0}-x_{i}^{0})\right) _{i,j=1}^{m}\right] .
\end{eqnarray}%
Here the function $g_{i,j}(x)$ is defined as follows
\begin{equation}
g_{i,j}(x)=\oint_{C_{0}}\frac{d\xi }{2\pi \mathrm{i}}\frac{\left( \kappa
+\kappa \xi -1\right) ^{i-1}\left( 1+\xi \right) ^{x}}{\xi ^{j}}.
\end{equation}

\paragraph{The case $\protect\kappa >1$}

Let $\kappa $ be a real number, $\kappa >1$. We return to the formulas (\ref%
{P^svw_T(X,X0)},\ref{A}). The crucial distinction from the case $\left\vert
\kappa \right\vert <1$ is that the presence of exponentially growing terms $%
\kappa ^{n_{i}}$ affects the range of values of $v_{1},\ldots ,v_{m}$, which
make the major contribution to the final sum of (\ref{P^svw_T(X,X0)}).
Indeed, we can write
\begin{equation}
\kappa ^{n_{i}}=e^{tv_{i}\log \kappa }\kappa ^{-x_{i}+x_{m}^{0}}.
\end{equation}%
Therefore, if we keep $v_{i}$ fixed, the sum over $x_{i}$ is rapidly
converging. At the same time the maximum of the $v_{i}$-dependent
exponential part of the r.h.s. of (\ref{P^svw_T(X,X0)}) is shifted due to
the appearance of the additional term $tv_{i}\log \kappa $. In a sense, the
roles of the variables $x_{2},\ldots ,x_{m}$ and $n_{2},\ldots ,n_{m}$ are
interchanged compared to the case $\kappa <1$.

Consequently, instead of summing over $n_{2},\ldots ,n_{m}$ and then over $%
x_{1},\ldots ,x_{m}$ we go to the variables \thinspace $v_{1},\ldots ,v_{m}$%
, (\ref{v_i kappa neq 1}), $\ $\ and $x_{2},\ldots ,x_{m}$ and evaluate the
sum over the latter first.%
\begin{eqnarray}
&&\sum_{x_{1}^{0}<x_{1}<x_{2}<\cdots <x_{m}}\sum_{\left\{ n_{2},\ldots
,n_{m}\right\} \in \mathbb{Z}_{\geq 0}^{m-1}}\mathcal{A}\left( \left\{ \xi
_{i},v_{i}\right\} _{i=1}^{m},\left\{ n_{k}\right\} _{i=2}^{m}\right)
\label{sum A1} \\
&=&\sum_{\left\{ v_{1},\ldots ,v_{m}\right\} \in t^{-1}\mathbb{Z}_{\geq
x_{1}^{0}}^{m}}\sum_{x_{2}=x_{1}+1}^{v_{2}t+x_{m}^{0}}\cdots
\sum_{x_{m}=x_{m-1}+1}^{v_{m}t+x_{m}^{0}}  \nonumber \\
&&\mathcal{A}\left( \left\{ \xi _{i},v_{i}\right\} _{i=1}^{m},\left\{
v_{i}-\left( x_{i}-x_{m}^{0}\right) /t\right\} _{i=2}^{m}\right)  \nonumber
\end{eqnarray}%
Collecting the factors of $\mathcal{A}\left( \left\{ \xi _{i},v_{i}\right\}
_{i=1}^{m},\left\{ v_{i}-\left( x_{i}-x_{m}^{0}\right) /t\right\}
_{i=2}^{m}\right) $ dependent on $x_{2},\ldots ,x_{m}$ we can evaluate the
sum over these variables
\begin{eqnarray}
&&\sum_{x_{2}=x_{1}+1}^{v_{2}t+x_{m}^{0}}\cdots
\sum_{x_{m}=x_{m-1}+1}^{v_{m}t+x_{m}^{0}}\prod_{i=2}^{m}\kappa
^{-x_{i}+x_{m}^{0}}\left(
\begin{array}{c}
tv_{i}-x_{i}+x_{m}^{0}+i-2 \\
tv_{i}-x_{i}+x_{m}^{0}%
\end{array}%
\right)  \nonumber \\
&=&\frac{\kappa ^{\left( m-1\right) \left( x_{m}^{0}-x_{1}\right) }}{\left(
\kappa -1\right) \cdots \left( \kappa ^{m-1}-1\right) }\prod_{i=2}^{m}\frac{%
\left( tv_{i}\right) ^{i-2}}{\left( i-2\right) !}\left( 1+O\left( \frac{1}{t}%
\right) \right) .  \label{sum over x}
\end{eqnarray}%
Here we extended the upper limit of all the summations to infinity, which
yields a correction of order of $\kappa ^{-v_{i}t}$, and we used Stirling's
formula to approximate the binomial coefficient%
\begin{equation}
\left(
\begin{array}{c}
i+n \\
n%
\end{array}%
\right) =\frac{n^{i}}{i!}\left( 1+O\left( \frac{1}{n}\right) \right) .
\label{binomial}
\end{equation}%
We also imply that the value of $v_{i}$ in the effective summation range is
finite and positive. Indeed, the range of summation over $v_{i}$ is defined
as above by the requirement that the exponential parts of $\mathcal{A}\left(
\left\{ \xi _{i},v_{i}\right\} _{i=1}^{m},\left\{ v_{i}-\left(
x_{i}-x_{m}^{0}\right) /t\right\} _{i=2}^{m}\right) $ are not too small.
Specifically, the exponentiated expressions are
\begin{equation}
\exp \left\{ t\left[ h_{i}\left( \frac{\left( 1-v_{i}\right) p}{\left(
1-p\right) v_{i}}\right) +v_{i}\log \kappa \right] \right\}  \label{exp1}
\end{equation}%
for $i=2,\ldots ,m$, and
\begin{equation}
\exp \left\{ t\left[ h_{1}\left( \frac{\left( 1-v_{i}\right) p}{\left(
1-p\right) v_{i}}\right) -\left( m-1\right) v_{1}\log \kappa \right]
\right\} ,  \label{exp2}
\end{equation}%
the term $t\left( 1-m\right) v_{1}\log \kappa =\log \left( \kappa ^{\left(
m-1\right) \left( x_{m}^{0}-x_{1}\right) }\right) $ in the latter coming
from the result of the summation over $x_{2},\ldots ,x_{m}$, (\ref{sum over
x}). For a real positive $\kappa $ the major contribution to the sums over $%
v_{1},\ldots ,v_{m}$ comes from the neighborhood of the maxima of the
exponentiated expressions
\begin{equation}
\left\vert v_{i}-u_{i}\right\vert <t^{-1/2}\log t
\end{equation}%
where the maxima $u_{i}$ are located at
\begin{equation}
u_{1}=\mathrm{u}(\kappa ^{1-m})
\end{equation}%
and
\begin{equation}
u_{i}=\mathrm{u}(\kappa )
\end{equation}%
for $i=2,\ldots ,m$ where
\begin{equation}
\mathrm{u}(x)=\frac{px}{1+\left( x-1\right) p}.  \label{statpoint}
\end{equation}%
Then we can follow the above procedure to evaluate the sums over $%
v_{1},\ldots ,v_{m}$. Going from the sum to an integral over the variables%
\begin{equation}
y_{i}=\sqrt{t}\frac{v_{i}-u_{i}}{\sqrt{u_{i}\left( 1-u_{i}\right) }}.
\end{equation}%
we arrive at the integral expression
\begin{eqnarray}
&&\sum_{\left\{ v_{1},\ldots ,v_{m}\right\} \in t^{-1}\mathbb{Z}_{\geq
x_{1}^{0}}^{m}}\sum_{x_{2}=x_{1}+1}^{v_{2}t+x_{m}^{0}}\cdots
\sum_{x_{m}=x_{m-1}+1}^{v_{m}t+x_{m}^{0}}  \nonumber \\
&&\mathcal{A}\left( \left\{ \xi _{i},v_{i}\right\} _{i=1}^{m},\left\{
v_{i}-\left( x_{i}-x_{m}^{0}\right) /t\right\} _{i=2}^{m}\right)  \nonumber
\\
&=&\left( 1-p+\kappa p\right) ^{\left( m-1\right) t}\left( 1-p+\kappa
^{1-m}p\right) ^{t}  \nonumber \\
&&\times \frac{\left( u-u_{1}\right) ^{m-1}\left( 1-u\right) ^{\frac{\left(
m-1\right) \left( m-2\right) }{2}}}{u^{m\left( m-1\right) /2}u_{1}^{m-1}}%
\left( \frac{p}{1-p}\right) ^{\frac{m(m-1)}{2}}  \nonumber \\
&&\times \frac{\prod\limits_{k=1}^{m}\left( \xi _{k}-\frac{\left( 1-u\right)
p}{\left( 1-p\right) u}\right) ^{1-m}\left( \xi _{k}-\frac{\left(
1-u_{1}\right) p}{\left( 1-p\right) u_{1}}\right) ^{-1}}{\left( 2\pi \right)
^{m/2}\prod\limits_{i=2}^{m}\left[ \left( i-2\right) !\left( \kappa
^{i-1}-1\right) \right] }  \nonumber \\
&&\times \int\limits_{-\infty }^{+\infty }dy_{1}e^{-y_{1}^{2}/2}\cdots
\int\limits_{-\infty }^{+\infty }dy_{m}e^{-y_{m}^{2}/2}\prod\limits_{2\leq
l<j\leq m}^{m}\left( y_{j}-y_{l}\right)  \nonumber \\
&&\times \prod\limits_{s=2}^{m}\left( y_{s}+\frac{u\sqrt{t}}{\sqrt{u(1-u)}}%
\right) ^{s-2}\left( 1+O\left( \frac{1}{t}\right) \right) .
\label{sum kappa>1}
\end{eqnarray}%
Note that we keep the leading terms of the Taylor expansion in $\left(
v_{i}-u_{i}\right) $, $i=1,\ldots ,m$, everywhere under the integral except
the product in the last line, where we keep the terms of two subsequent
orders. The reason for the latter is that the leading terms of the
multipliers cancel due to the antisymmetry of the rest of the expression in $%
v_{2},\ldots ,v_{m}$. Therefore, what contributes is the antisymmetric part
of this line, that is $\prod\nolimits_{2\leq l<j\leq m}^{m}\left(
y_{l}-y_{j}\right) /\left( m-1\right) !$, which contains only the terms of
the same order. Inserting it, we again arrive at the Mehta integral $%
I_{m-1,1}$ over $(m-1)$, variables $y_{2},\ldots ,y_{m}$, while the integral
over $y_{1}$ decouple being just the Laplace integral. After substitution of
the explicit form of $u$ and $u_{1}$ the r.h.s. of (\ref{sum kappa>1})
becomes
\begin{eqnarray}
&&\frac{\left( 1-p+\kappa p\right) ^{\left( m-1\right) t}\left( 1-p+\kappa
^{1-m}p\right) ^{t}\left( \kappa ^{m}-1\right) ^{m-1}\kappa ^{-\frac{m\left(
m-1\right) }{2}}}{\prod\limits_{i=2}^{m}\left[ \left( i-1\right) !\left(
\kappa ^{i-1}-1\right) \right] }  \nonumber \\
&&\times I_{m-1,1}\left( 2\pi \right) ^{-\frac{m-1}{2}}\prod%
\limits_{k=1}^{m}\left( \xi _{k}-\frac{1}{\kappa }\right) ^{1-m}\left( \xi
_{k}-\kappa ^{m-1}\right) ^{-1}.
\end{eqnarray}%
This formula together with (\ref{survival probability def}), (\ref%
{P^svw_T(X,X0)}) and (\ref{Mehta}) yields
\begin{eqnarray}
\hspace{-0.5cm}\mathcal{P}_{t}\left( X^{0}\right) &=&\left( 1-p+\kappa
p\right) ^{\left( m-1\right) t}\left( 1-p+\kappa ^{1-m}p\right) ^{t}
\nonumber \\
&\times &\frac{\left( \kappa ^{m}-1\right) ^{m-1}\left( -1\right) ^{\frac{%
m\left( m-1\right) }{2}}}{\prod\limits_{i=1}^{m-1}\left( \kappa
^{i}-1\right) }\prod\limits_{k=1}^{m}\oint_{C_{r>\kappa ^{m-1}}}\frac{d\xi
_{k}}{2\pi \mathrm{i}}  \label{P_t(X^0) kappa>1 (2)} \\
&\times &\frac{\xi _{k}^{x_{m}^{0}-x_{k}^{0}}}{\left( \xi _{k}-\kappa
^{-1}\right) ^{m-k}\left( \xi _{k}-\kappa ^{m-1}\right) }\prod\limits_{1\leq
i<j\leq m}(\xi _{j}-\xi _{i}).  \nonumber
\end{eqnarray}%
The integration over $\xi _{k}$, $k=1,\ldots ,m$ is performed over a circle
encircling the singularities of the expression under the integral, i.e. $%
\left\vert \xi _{k}\right\vert >\kappa ^{m-1}$. First we integrate over $\xi
_{m}$, then over $\xi _{m-1}$, $\xi _{m-2},\cdots ,\xi _{1}$. It turns out
that, if we integrate in this order, the expression under the integral being
evaluated will contain only one simple pole at each step. As a result we
arrive at the simple one term expression
\begin{eqnarray}
&&\prod_{k=1}^{m}\oint_{C_{r>\kappa ^{m-1}}}\frac{d\xi _{k}}{2\pi \mathrm{i}}%
\frac{\xi _{k}^{x_{m}^{0}-x_{k}^{0}}}{\left( \xi _{k}-\kappa ^{-1}\right)
^{m-k}\left( \xi _{k}-\kappa ^{m-1}\right) }\prod\limits_{1\leq i<j\leq
m}(\xi _{j}-\xi _{i}) \\
&=&\left( -1\right) ^{\frac{m\left( m-1\right) }{2}}\kappa
^{\sum_{k=1}^{m-1}\left( x_{k}^{0}-x_{m}^{0}\right) }  \nonumber
\end{eqnarray}%
This formula together with (\ref{P_t(X^0) kappa>1 (2)}) results in (\ref%
{P_t(X^0) kappa>1}).

\begin{remark}
An extension of the approximation technique used to complex values of $%
\kappa $ is problematic for $\left\vert \kappa \right\vert >1$. The reason
is that in this case the critical points $u_{1},\ldots ,u_{m}$ are away from
the real axis. It follows then that their contribution can be exponentially
smaller the other corrections appearing.
\end{remark}

\subsubsection{The limiting case $\protect\kappa \rightarrow 1$}

Now we consider the limiting case
\begin{equation}
t\rightarrow \infty ,\kappa \rightarrow 1,\left( 1-\kappa \right) \sqrt{t}%
=const.
\end{equation}%
We start with the formula (\ref{int}) and expand the term $(1-\kappa
z_{i})^{-i+1}$ into its Taylor series,%
\begin{eqnarray}
&&P_{t}\left( X,X^{0}\right) =\sum\limits_{\left\{ n_{i}\right\} \in \mathbb{%
Z}_{\geq 0}^{m}}\prod\limits_{i=2}^{m}\kappa ^{n_{i}}\left(
\begin{array}{c}
i+n_{i}-2 \\
n_{i}%
\end{array}%
\right) \\
&&\times \det \left( \oint \frac{dz_{i}}{2\pi \mathrm{i}z_{i}}\left( 1-p+%
\frac{p}{z_{i}}\right) ^{t}z_{i}^{x_{i}+n_{i}-x_{j}^{0}}(1-\kappa
z_{i})^{j-1}\right) _{1\leq i,j\leq m}.  \nonumber
\end{eqnarray}%
The integral in the determinant can\ be evaluated in the saddle point
approximation, the analysis being similar to the one above, (\ref{f_i(z)})-(%
\ref{P_T(X,X^0)_1}), with the same function $h_{i}(z)$, (\ref{f_i(z)})
except that $v_{i}$ now depends on $n_{i}$:
\begin{equation}
v_{i}=\frac{x_{i}+n_{i}-x_{m}^{0}}{t}.
\end{equation}%
What is special about the limit $\kappa \rightarrow 1$ is that the saddle
point can coincide with a zero of the factor $(1-\kappa z)^{j}$ within the
effective range of the summation over $v_{i}$. Therefore, instead of
expanding this term into a Taylor series, we leave it in the integral as is,
while the rest can be expanded around the saddle point as usual. Then we use
the following formula for Hermite polynomials (see \cite{Gradshtein Ryzhik},
formula 3.462.4),
\begin{equation}
\int\limits_{-\infty }^{\infty }e^{-x^{2}}\left( x-\beta \right) ^{n}dx=%
\sqrt{\pi }\left( \frac{\mathrm{i}}{2}\right) ^{n}H_{n}\left( \mathrm{i}%
\beta \right) .
\end{equation}%
As a result we obtain
\begin{eqnarray}
\lefteqn{\oint \frac{dz}{2\pi \mathrm{i}z}%
e^{th_{i}(z)}z^{x_{m}^{0}-x_{j}^{0}}(1-\kappa z)^{j-1}=e^{th_{i}\left(
z^{\ast }\right) }\left( z^{\ast }\right) ^{x_{m}^{0}-x_{j}^{0}-1}}
\label{integral1} \\
&&\times \int_{-\infty }^{\infty }\frac{d\xi }{2\pi }e^{-\frac{1}{2}%
t|h_{i}^{\prime \prime }\left( z^{\ast }\right) |\xi ^{2}}(1-\kappa \left(
z^{\ast }+\mathrm{i}\xi \right) )^{j-1}  \nonumber \\
&=&\frac{e^{th_{i}\left( \frac{\left( 1-v_{i}\right) p}{\left( 1-p\right)
v_{i}}\right) }}{\sqrt{\pi }\left( 2t\right) ^{j/2}}H_{j-1}\left( \sqrt{%
\frac{tv_{i}\left( 1-v_{i}\right) }{2}}\left( \frac{1}{\kappa }\frac{\left(
1-p\right) v_{i}}{\left( 1-v_{i}\right) p}-1\right) \right)  \nonumber \\
&&\times \kappa ^{j-1}\left( \frac{p}{1-p}\right)
^{j+x_{m}^{0}-x_{j}^{0}}\!\!\frac{\left( 1-v_{i}\right)
^{x_{m}^{0}-x_{j}^{0}-j/2}}{v_{i}^{x_{m}^{0}-x_{j}^{0}-3j/2}}\left(
1+O\left( \frac{1}{t}\right) \right) .  \nonumber
\end{eqnarray}%
Next, we argue that the dominant range of the summation over $X$ and $%
\left\{ n_{i}\right\} _{i=1}^m$ is the domain
\begin{equation}
pt-\sqrt{t}\log t\leq x_{i}+n_{i}\leq pt+\sqrt{t}\log t,  \label{range}
\end{equation}%
where $x_{i}$ varies within the range%
\begin{equation}
x_{1}^{0}\leq x_{1}\leq \cdots \leq x_{m}\leq t,
\end{equation}%
and
\begin{equation}
0\leq n_{i}<\infty .  \label{n_i range}
\end{equation}%
To this end, consider the integral (\ref{integral1}) for some particular $i$
and $j$. After expanding $(1-\kappa z_{i})^{j}$ into a binomial sum, it
becomes a finite sum of terms like%
\begin{equation}
\left( 1-p\right)
^{t-(n_{i}+x_{i}-x_{j}^{0}+k)}p^{n_{i}+x_{i}-x_{j}^{0}+k}\left(
\begin{array}{c}
t \\
n_{i}+x_{i}-x_{j}^{0}+k%
\end{array}%
\right) ,
\end{equation}%
where $k$ is a finite integer, $0\leq k\leq j$. Beyond the range (\ref{range}%
) this can be estimated using Stirling's formula to be $O(t^{-1/2}e^{-\frac{%
\left( \log t\right) ^{2}}{2p\left( 1-p\right) }})$. The summation over $%
x_{i}$, which includes at most $t$ nonzero terms, multiplies this estimate
by a factor of $t$. Finally the summation over $n_{i}$ yields an additional
factor $\left( 1-\kappa \right) ^{-i}$, with the result that the order of
the contribution from outside the domain (\ref{range}) is
\begin{equation}
O\left( \left( 1-\kappa \right) ^{-i}t^{1/2}e^{-\frac{\left( \log t\right)
^{2}}{2p\left( 1-p\right) }}\right) .  \label{correction}
\end{equation}%
Below, the leading term of the sum of interest will be shown to decay at
most as a power of $t$. Therefore, when $\kappa $ is such that $\left(
1-\kappa \right) =O(t^{-s})$ with any fixed $s>0$, the term (\ref{correction}%
) is asymptotically negligible.

One can approximate (\ref{integral1}) using the Taylor formula, which yields
\begin{eqnarray}
&&\det \left[ H_{j-1}\left( \sqrt{\frac{tp\left( 1-p\right) }{2}}\left(
\frac{v_{i}-p}{p\left( 1-p\right) }-1+\frac{1}{\kappa }\right) \right) %
\right] _{i,j=1}^{m}  \nonumber \\
&&\qquad \times \frac{\prod\limits_{i=1}^{m}e^{-t\frac{\left( v_{i}-p\right)
^{2}}{2p\left( 1-p\right) }}}{\pi ^{\frac{m}{2}}\left[ 2tp\left( p-1\right) %
\right] ^{\frac{m(m+1)}{4}}}\left( 1+O\left( \frac{\left( \log t\right) ^{3}%
}{\sqrt{t}}\right) \right) .
\end{eqnarray}%
The determinant can be simplified by adding to every line a multiple of the
lines below it, such that all the terms of the Hermite polynomials except
the highest cancel:
\begin{equation}
\det \left[ H_{j-1}\left( a_{i}\right) \right] _{i,j=1}^{m}=\left( -2\right)
^{\frac{m\left( m-1\right) }{2}}\Delta (a_{1},\ldots ,a_{m})
\end{equation}%
Thus, the survival probability takes the following form%
\begin{eqnarray}
\mathcal{P}_{t}\left( X^{0}\right)
&=&\sum\limits_{\{x_{i}\}_{i=1}^{m}}\sum\limits_{\{n_{i}\}_{i=2}^{m}}\frac{%
(-1)^{\frac{m\left( m-1\right) }{2}}}{\left( 2\pi \right) ^{\frac{m}{2}}t^{%
\frac{m(m+1)}{2}}\left[ p\left( p-1\right) \right] ^{\frac{m^{2}}{2}}}
\nonumber  \label{P_t^svw} \\
&\times &\prod\limits_{i=2}^{m}\kappa ^{n_{i}}e^{-\frac{\left(
x_{i}+n_{i}-x_{m}^{0}-pt\right) ^{2}}{2p(1-p)t}}\left(
\begin{array}{c}
i+n_{i}-2 \\
n_{i}%
\end{array}%
\right) \\
&\times &\Delta (x_{1},x_{2}+n_{2},\ldots ,x_{m}+n_{m})\left[ 1+O\left(
\frac{\left( \log t\right) ^{3}}{\sqrt{t}}\right) \right] ,  \nonumber
\end{eqnarray}%
where the summation is over the domains of $\{x_{i}\}_{i=1}^{m}$ and $%
\{n_{i}\}_{i=2}^{m}$ defined by the inequalities (\ref{range})-(\ref{n_i
range}). Due to the presence of the Gaussian factor $\exp(-t(v_{i}-p)^{2}/(
2p( 1-p) t))$ the sum over $n_{i}$ converges uniformly in $x_{i}$. Therefore
we can interchange the order of summations over $x_{i}$ and over $n_{i}$.
This allows us to apply Lemma \ref{sum-to-integral} first to the variables $%
x_{i}$ and then also to the variables $n_{i}$. As the characteristic scale
of $n_{i}$ is of order $\sqrt{t}$ we can use the approximation (\ref%
{binomial}) for the binomial coefficient, where the correction term yields
an error of the order $O\left( t^{-1/2}\right) $ in the final result. To
write down the final limiting formula as $t\rightarrow \infty $ for $%
\mathcal{P}\left( X^{0}\right) $ we introduce the rescaled variables
\begin{eqnarray}
u_{i} &=&\frac{\left( x_{i}-pt\right) }{\sqrt{tp\left( 1-p\right) }}, \\
\nu _{i} &=&\frac{n_{i}}{\sqrt{tp\left( 1-p\right) }}.
\end{eqnarray}%
and the transition parameter $\alpha $,$\ $(\ref{alpha}), which is constant
in the limit under consideration. The formula (\ref{P_t^svw}) then takes the
form (\ref{result transition}) , where $f_{m}\left( \alpha \right) $ is
given by (\ref{f_m(alpha)}).

\section{Asymptotic behaviour of $f_{m}\left( \protect\alpha \right) \label%
{f_m(alpha)_section}$}

In this section we evaluate the limiting behaviour of $f_{m}(\alpha )$ for $%
\alpha \rightarrow \infty $ and $\alpha \rightarrow -\infty $ and its value
at $\alpha =0$ . In the latter case it is just the probability normalization
of the TASEP, so $f_{m}\left( 0\right) $ must be equal $1$. The limit $%
\alpha \rightarrow -\infty $ has no probabilistic meaning, but it can be
considered a particular limit of the generating function of the rescaled
particle current in the TASEP: see Section $2$. Let us introduce the
notation
\begin{eqnarray}
J_{m}\left( \alpha \right) &=&\int\limits_{-\infty }^{\infty
}du_{1}\int\limits_{u_{1}}^{\infty }du_{2}\cdots
\!\!\int\limits_{u_{m-1}}^{\infty }du_{m}\int\limits_{0}^{\infty }d\nu
_{2}\cdots \!\int\limits_{0}^{\infty }d\nu _{m} \\
&\times &e^{-\frac{1}{2}u_{1}^{2}}\prod\limits_{i=2}^{m}\nu _{i}^{i-2}e^{-%
\frac{1}{2}\left( u_{i}+\nu _{i}\right) ^{2}-\alpha \nu _{i}}\Delta \left(
u_{1},\nu _{2}+u_{2},\ldots ,\nu _{m}+u_{m}\right) .  \nonumber
\end{eqnarray}%
for the multiple integral entering into the expression of $f_{m}\left(
\alpha \right) $. Then we have
\begin{equation}
f_{m}\left( \alpha \right) =\frac{\left( -1\right) ^{\frac{m\left(
m-1\right) }{2}}}{\left( 2\pi \right) ^{\frac{m}{2}}2!\cdots \left(
m-2\right) !}J_{m}\left( \alpha \right) .  \label{f_m(alpha) via J_m}
\end{equation}%
The form of $J_{m}\left( \alpha \right) $ is reminiscent of the multiple
integrals which appear in the theory of Gaussian random matrix ensembles.
The following three lemmas show that in the three limiting cases $%
J_{m}\left( \alpha \right) $ can be explicitly evaluated in the form of
Mehta integrals.

\begin{lemma}
\begin{equation}
\lim_{\alpha \rightarrow \infty }\alpha ^{\frac{m\left( m-1\right) }{2}%
}J_{m}\left( \alpha \right) =I_{m,1/2}\frac{\left( -1\right) ^{\frac{m\left(
m-1\right) }{2}}2!\cdots \left( m-2\right) }{m!}  \label{alpha->0}
\end{equation}%
where $I_{m,1/2}$ is the Mehta integral defined in (\ref{Mehta}).
\end{lemma}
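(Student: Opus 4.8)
My plan is to obtain the limit by a Laplace-type (Watson's lemma) analysis of the $\nu$-integrals, followed by the identification of the surviving $u$-integral with a Mehta integral. The factor $e^{-\alpha\nu_i}$ concentrates each $\nu_i$-integral near $\nu_i=0$ as $\alpha\to\infty$, so the leading contribution comes from the values of the smooth prefactors at $\nu_i=0$. Since the Vandermonde $\Delta(u_1,\nu_2+u_2,\dots,\nu_m+u_m)$ is polynomial in the $\nu_i$, I would expand it and apply Watson's lemma in each variable separately, using $\int_0^\infty \nu^{i-2}e^{-\alpha\nu}\phi(\nu)\,d\nu\sim \phi(0)\,(i-2)!\,\alpha^{-(i-1)}$. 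Multiplying the resulting powers over $i=2,\dots,m$ gives $\prod_{i=2}^m\alpha^{-(i-1)}=\alpha^{-m(m-1)/2}$, which is exactly the normalising power on the left-hand side of the claim. Equivalently, the substitution $\nu_i=w_i/\alpha$ makes the same power emerge from the Jacobian $\alpha^{-(m-1)}$ together with the monomials $\prod_i\nu_i^{i-2}=\alpha^{-(m-1)(m-2)/2}\prod_i w_i^{i-2}$.

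Taking $\alpha\to\infty$ then amounts to setting $\nu_i=0$ in the smooth factors, so that $e^{-\frac12(u_i+\nu_i)^2}\to e^{-\frac12 u_i^2}$ and $\Delta(u_1,\nu_2+u_2,\dots)\to\Delta(u_1,\dots,u_m)$. The $\nu$-integrals decouple and each yields $\int_0^\infty w^{i-2}e^{-w}\,dw=\Gamma(i-1)=(i-2)!$, so their product is $\prod_{i=2}^m(i-2)!=2!\cdots(m-2)!$, the factorials appearing in the claimed constant. What survives is
\[
\alpha^{m(m-1)/2}J_m(\alpha)\longrightarrow \Bigl(\prod_{i=2}^m(i-2)!\Bigr)\int_{-\infty}^{\infty}\!du_1\int_{u_1}^{\infty}\!du_2\cdots\int_{u_{m-1}}^{\infty}\!du_m\;e^{-\frac12\sum_{i=1}^m u_i^2}\,\Delta(u_1,\dots,u_m).
\]

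To identify this simplex integral with $I_{m,1/2}$, I use that on the ordered region $u_1<\cdots<u_m$ each factor $u_i-u_j$ with $i<j$ is negative, whence $\Delta(u_1,\dots,u_m)=(-1)^{m(m-1)/2}\prod_{1\le i<j\le m}|u_i-u_j|$. The integrand $e^{-\frac12\sum u_i^2}\prod_{i<j}|u_i-u_j|$ is symmetric under permutations of the $u_i$, so its integral over $\mathbb{R}^m$ equals $m!$ times its integral over the simplex, and by the definition (\ref{Mehta}) with $k=1/2$ that full-space integral is precisely $I_{m,1/2}$. Hence the simplex integral equals $(-1)^{m(m-1)/2}I_{m,1/2}/m!$, and combining this with the product of factorials reproduces the right-hand side of the lemma.

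The step demanding genuine care is the justification of the interchange of the limit with the $u$-integration, i.e. the uniform validity of the Watson estimate across the whole simplex. The exponent of the $\nu_i$-integral, $-\frac12(u_i+\nu_i)^2-\alpha\nu_i$, has its stationary point at $\nu_i^\ast=-u_i-\alpha$; this lies in $(0,\infty)$ precisely when $u_i<-\alpha$, and there the integral is no longer governed by $\nu_i=0$, so the naive leading term fails to be uniform. I would control this by splitting the domain. On the part where $u_i\ge-\alpha$ for all $i$ the peak stays at the origin, and after the normalisation $\alpha^{m(m-1)/2}$ the Watson asymptotics hold with the $\alpha$-independent integrable majorant furnished by the Gaussians $e^{-\frac12 u_i^2}$ times the polynomial $\Delta$, so dominated convergence applies. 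On the complementary corner where some $u_i<-\alpha$, evaluating the $\nu_i$-integral at $\nu_i^\ast$ gives a weight of order $e^{\frac12\alpha^2+\alpha u_i}$, whose integral over $u_i\in(-\infty,-\alpha)$ is $O(e^{-\frac12\alpha^2})$ and hence asymptotically negligible even after multiplication by $\alpha^{m(m-1)/2}$. This domain-splitting estimate is the main obstacle, and it is exactly where the Gaussian decay in the $u$-variables is essential.
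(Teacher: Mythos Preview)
Your argument is correct in outline and reaches the same endpoint as the paper, but the route you take to justify the interchange of limit and integration is more laborious than necessary, and the claimed ``$\alpha$-independent integrable majorant furnished by the Gaussians $e^{-\frac12 u_i^2}$'' on the region $u_i\ge -\alpha$ is not quite right as stated: after rescaling $\nu_i=w_i/\alpha$ the $w_i$-integral is bounded by $e^{-u_i^2/2}\int_0^\infty w_i^{i-2}e^{-(1+u_i/\alpha)w_i}\,dw_i$, and the exponential rate $1+u_i/\alpha$ degenerates as $u_i\downarrow -\alpha$, so the bound is not uniform there without a further cut (say at $u_i=-\alpha/2$) and a separate estimate on the strip $-\alpha<u_i<-\alpha/2$, where the Gaussian $e^{-u_i^2/2}$ does the work. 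This can all be made rigorous, but it is fiddly.

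The paper sidesteps the whole issue with one extra change of variables: instead of keeping $(u_i,\nu_i)$ it passes to $\varphi_i=u_i+\nu_i$ and $\mu_{i-1}=\alpha\nu_i$. In these variables the Gaussian becomes $e^{-\varphi_i^2/2}$ and the Vandermonde becomes $\Delta(\varphi_1,\dots,\varphi_m)$, neither depending on the $\mu$'s; the only $\alpha$-dependence sits in the lower limits of the $\varphi$-integrals, which shift by $O(1/\alpha)$. The resulting inner integral $g(\mu_1,\dots,\mu_{m-1};\alpha)$ is then bounded uniformly in $\alpha$ and in the $\mu_i$ by the single constant $I_{m,1/2}$ (replace $\Delta$ by $|\Delta|$ and extend the limits), so dominated convergence is immediate and no domain splitting is needed. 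Your rescaling $\nu_i=w_i/\alpha$ is the same as the paper's $\mu_{i-1}=\alpha\nu_i$; what you are missing is the companion substitution $\varphi_i=u_i+\nu_i$, which is precisely what turns the troublesome factor $e^{-\frac12(u_i+\nu_i)^2}$ into a clean $e^{-\varphi_i^2/2}$ and makes the uniform bound trivial.
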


\begin{proof}
Let us make\ a variable change under the integral (\ref{f_m(alpha)})
introducing new integration variables
\begin{eqnarray}
\varphi _{1} &=&u_{1}, \\
\varphi _{i} &=&\nu _{i}+u_{i},i=2,\ldots ,m, \\
\mu _{i-1} &=&\alpha \nu _{i},i=2,\ldots ,m.
\end{eqnarray}%
In the new variables the integral (\ref{f_m(alpha)}) can be written as%
\begin{equation}
J_{m}\left( \alpha \right) =\frac{1}{\alpha ^{m(m-1)/2}}\prod%
\limits_{i=1}^{m-1}\int\limits_{0}^{\infty }d\mu _{i}\mu _{i}^{i-1}e^{-\mu
_{i}}g(\mu _{1},\ldots ,\mu _{m-1};\alpha ),  \label{f(alpha)_1}
\end{equation}%
where we introduce the notation
\begin{eqnarray}
\lefteqn{g(\mu _{1},\ldots ,\mu _{m-1};\alpha )}  \nonumber \\
&=&\int\limits_{-\infty }^{\infty }d\varphi _{1}\int\limits_{\varphi _{1}+%
\frac{\mu _{1}}{\alpha }}^{\infty }d\varphi _{2}\int\limits_{\varphi _{2}+%
\frac{\mu _{2}-\mu _{1}}{\alpha }}^{\infty }\!d\varphi _{3} \\
\cdots &&\!\!\!\int\limits_{\varphi _{m-1}+\frac{\mu _{m-1}-\mu _{m-2}}{%
\alpha }}^{\infty }d\varphi _{m}e^{-\frac{1}{2}\left( \varphi
_{m}^{2}+\cdots +\varphi _{1}^{2}\right) }\Delta \left( \varphi _{1},\ldots
,\varphi _{m}\right) .
\end{eqnarray}%
The function $g(\mu _{1},\ldots ,\mu _{m-1};\alpha )$ is bounded uniformly
in $\alpha \in \mathbb{R}$.
\begin{equation}
\left\vert g(\mu _{1},\ldots ,\mu _{m-1};\alpha )\right\vert \leq I_{m,1/2},
\end{equation}%
which can be shown by replacing the Vandermonde determinant under the
integral by its absolute value and extending the lower integration limits to
minus infinity. Therefore the function under the integral in (\ref%
{f(alpha)_1}) is uniformly bounded and integrable. By the dominating
convergence theorem one can interchange the limit $\alpha \rightarrow \infty
$ and integration. Then, for the function $g(\mu _{2},\ldots ,\mu
_{m};\alpha )$ we have
\begin{equation}
\lim_{\alpha \rightarrow \infty }g(\mu _{2},\ldots ,\mu _{m};\alpha
)=I_{m,1/2}\frac{\left( -1\right) ^{\frac{m\left( m-1\right) }{2}}}{m!}.
\end{equation}%
Remarkably the limiting value does not depend on the variables \newline
$\left\{ \mu _{1},\ldots ,\mu _{m-1}\right\} $. Therefore the integration in
(\ref{f(alpha)_1}) can be performed independently for each $i=2,\ldots ,m$,
each resulting in $\left( i-1\right) !$. This yields (\ref{alpha->0}).
\end{proof}

\begin{lemma}
\begin{equation}
J_{m}\left( 0\right) =\frac{\left( -1\right) ^{\frac{m(m-1)}{2}}}{\left(
m-1\right) !m!}I_{m,1}
\end{equation}
\end{lemma}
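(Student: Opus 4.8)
The plan is to reduce $J_m(0)$ to a single Gaussian average of $\Delta(\varphi)$ against a piecewise polynomial, and then use antisymmetry to collapse it onto $I_{m,1}$. First I would make the substitution $\varphi_1=u_1$, $\varphi_i=u_i+\nu_i$ for $i=2,\dots,m$, which has unit Jacobian and turns the Gaussian weight into $e^{-\frac12\sum_{i=1}^m\varphi_i^2}$, the Vandermonde into $\Delta(\varphi_1,\dots,\varphi_m)$, the monomials $\nu_i^{i-2}$ into $(\varphi_i-u_i)^{i-2}$, and the constraints $\nu_i>0$ into $u_i<\varphi_i$. Performing the inner integration over $u_2,\dots,u_m$ first gives
\begin{equation}
J_m(0)=\int_{\mathbb{R}^m}e^{-\frac12\sum_{i=1}^m\varphi_i^2}\,\Delta(\varphi)\,K(\varphi)\,d\varphi,\qquad K(\varphi)=\int_{\varphi_1<u_2<\dots<u_m,\ u_i<\varphi_i}\ \prod_{i=2}^m(\varphi_i-u_i)^{i-2}\,du_2\cdots du_m .
\end{equation}
I would record two structural facts about $K$: it is homogeneous of degree $m(m-1)/2$ (the weight has degree $(m-1)(m-2)/2$ and the integration cone contributes $m-1$), and its domain is nonempty precisely when $\varphi_1=\min_i\varphi_i$, so $K$ is supported on that set and is piecewise polynomial.

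Next I would exploit antisymmetry. Since $e^{-\frac12\sum\varphi_i^2}$ is symmetric and $\Delta$ antisymmetric, the integral only sees the antisymmetric projection $K^{A}=\frac1{m!}\sum_{\sigma}\mathrm{sgn}(\sigma)\,K\circ\sigma$, that is $J_m(0)=\int e^{-\frac12\sum\varphi_i^2}\,\Delta\,K^{A}\,d\varphi$. The projection $K^{A}$ is antisymmetric and homogeneous of degree $m(m-1)/2$; once one checks that the piecewise pieces recombine into a genuine polynomial, $K^{A}$ must be a constant multiple of the unique antisymmetric polynomial of that degree, namely $K^{A}=c_m\Delta$. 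Hence $J_m(0)=c_m\int e^{-\frac12\sum\varphi_i^2}\Delta^2\,d\varphi=c_m I_{m,1}$, and the whole lemma reduces to identifying $c_m=\frac{(-1)^{m(m-1)/2}}{(m-1)!\,m!}$.

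To pin down $c_m$ I would evaluate $K^{A}$ on the chamber $\varphi_1<\dots<\varphi_m$. There the support property forces $K(\varphi_\sigma)\neq0$ only for permutations with $\sigma(1)=1$, so $K^{A}$ collapses to the $(m-1)$-fold antisymmetrization over $\varphi_2,\dots,\varphi_m$ of $K(\varphi_1,\varphi_2,\dots,\varphi_m)$, which should equal $m!\,c_m\,\Delta=\frac{(-1)^{m(m-1)/2}}{(m-1)!}\Delta$. After the elementary $u$-integrations this antisymmetrization is a signed sum of powers that factors into a Vandermonde: for $m=3$ it is exactly $(\varphi_3-\varphi_1)^3-(\varphi_3-\varphi_2)^3-(\varphi_2-\varphi_1)^3=3(\varphi_2-\varphi_1)(\varphi_3-\varphi_1)(\varphi_3-\varphi_2)=-3\Delta$, giving $K^{A}=-\tfrac1{12}\Delta$, i.e. $c_3=-\tfrac1{12}$, while $m=2$ gives $K^{A}=-\tfrac12\Delta$, $c_2=-\tfrac12$; both agree with the claimed formula.

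I expect the main obstacle to be exactly this last step: establishing, for general $m$, the factorization identity that turns the $(m-1)$-fold antisymmetrization of the piecewise polynomial $K$ into $\frac{(-1)^{m(m-1)/2}}{(m-1)!}\Delta$, and verifying cleanly that $K^{A}$ carries no residual non-polynomial (kink) part so that the degree argument applies. A convenient way to organize the computation is as a recursion in $m$, using $\Delta(\varphi)=\prod_{j\ge2}(\varphi_1-\varphi_j)\,\Delta(\varphi_2,\dots,\varphi_m)$ to peel off the distinguished variable $\varphi_1$ that carries the support constraint, thereby reducing the constant $c_m$ to $c_{m-1}$ and the factorials $(m-1)!\,m!$.
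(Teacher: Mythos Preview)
Your starting substitution $\varphi_1=u_1$, $\varphi_i=u_i+\nu_i$ is exactly the change of variables the paper uses, and the overall architecture---reduce to a Gaussian integral against $\Delta(\varphi)$ times something coming from the $u$-integration, then use antisymmetry to collapse onto $I_{m,1}$---is the same. The execution differs. The paper does not package the $u$-integration into a kernel $K(\varphi)$ and argue abstractly; instead it integrates the $u_i$ out one at a time, from $i=m$ down to $i=2$, by interchanging the order of $u_m,\chi_m$ and then integrating by parts in each subsequent $u_i$. At every step the integration by parts produces two terms, one of which is symmetric in a pair $\chi_{i-1},\chi_i$ and is therefore killed by the Vandermonde. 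The surviving term simply raises the exponent, so after $m-1$ steps one is left with $\frac{1}{(m-1)!}\prod_{i=2}^m(\chi_i-\chi_1)^{i-1}$ multiplying $e^{-\frac12\sum\chi_i^2}\Delta(\chi)$, integrated over $\chi_1<\chi_i$. Antisymmetrising the product $\prod_{i\ge2}(\chi_i-\chi_1)^{i-1}$ in $\chi_2,\dots,\chi_m$ is then the classical monomial-to-Vandermonde identity, producing a second factor $\Delta(\chi_2,\dots,\chi_m)/(m-1)!$, and a final symmetrisation over which coordinate is smallest yields $I_{m,1}/((m-1)!\,m!)$.

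The gap you flag is real, and the paper's iterated integration by parts is precisely the device that closes it for general $m$. Your degree-plus-antisymmetry argument is appealing but, as you note, does not by itself force $K^A$ to be a multiple of $\Delta$: a piecewise polynomial, antisymmetric, positively homogeneous function of degree $m(m-1)/2$ need not be a global polynomial. In fact the polynomial-ness of $K^A$ is a red herring: since $e^{-\frac12\sum\varphi_i^2}\Delta$ is antisymmetric, all you actually need is that $K^A$ equals $c_m\Delta$ on a single Weyl chamber, which is exactly the factorisation identity you are trying to prove. Your recursion sketch is too vague to carry this; by contrast the paper's step-by-step mechanism makes the cancellations explicit and delivers both the factorisation and the constant $(-1)^{m(m-1)/2}/((m-1)!\,m!)$ in one stroke. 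If you want to salvage your route, the cleanest fix is to replace the abstract degree argument by exactly that inductive integration by parts on the $u$-variables.
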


\begin{proof}
Let us make the variable change
\begin{eqnarray}
\chi _{1} &=&u_{1}  \nonumber \\
\chi _{i} &=&\nu _{i}+u_{i},i=2,\ldots ,m.  \label{variable change}
\end{eqnarray}%
Then the integral takes the form%
\begin{eqnarray}
&&J_{m}\left( 0\right) =\int\limits_{-\infty }^{\infty }d\chi
_{1}\int\limits_{\chi _{1}}^{\infty }du_{2}\int\limits_{u_{2}}^{\infty
}du_{3}\cdots \int\limits_{u_{m-1}}^{\infty
}du_{m}\int\limits_{u_{2}}^{\infty }d\chi _{2}  \label{form1} \\
&&\cdots \int\limits_{u_{m}}^{\infty }d\chi _{m}e^{-\frac{1}{2}\chi
_{1}^{2}}\prod\limits_{i=2}^{m}\left( \chi _{i}-u_{i}\right) ^{i-2}e^{-\frac{%
1}{2}\chi _{i}^{2}}\Delta \left( \chi _{1},\ldots ,\chi _{m}\right) .
\nonumber
\end{eqnarray}%
The integrals over $u_{i}$, for $i=1,\ldots ,m$, can be evaluated step by
step. First, for $i=m$, we have%
\begin{eqnarray}
&&\int\limits_{u_{m-1}}^{\infty }du_{m}\int\limits_{u_{m}}^{\infty }d\chi
_{m}e^{-\frac{1}{2}\chi _{m}^{2}}\left( \chi _{m}-u_{m}\right) ^{m-2}\Delta
\left( \chi _{1},\ldots ,\chi _{m}\right) \\
&=&\frac{1}{m-1}\int\limits_{u_{m-1}}^{\infty }d\chi _{m}e^{-\frac{1}{2}\chi
_{m}^{2}}\left( \chi _{m}-u_{m-1}\right) ^{m-1}\Delta \left( \chi
_{1},\ldots ,\chi _{m}\right) ,  \nonumber
\end{eqnarray}%
which is done by changing the integration order. In the next step, the
integral over $u_{m-1}$ can be calculated by parts:
\begin{eqnarray}
&&\int\limits_{u_{m-2}}^{\infty }du_{m-1}\int\limits_{u_{m-1}}^{\infty
}d\chi _{m-1}\int\limits_{u_{m-1}}^{\infty }d\chi _{m}e^{-\frac{1}{2}\chi
_{m-1}^{2}-\frac{1}{2}\chi _{m}^{2}}  \nonumber \\
&&\times \left( \chi _{m-1}-u_{m-1}\right) ^{m-3}\left( \chi
_{m}-u_{m-1}\right) ^{m-1}\Delta \left( \chi _{1},\ldots ,\chi _{m}\right)
\nonumber \\
&=&\frac{1}{m-2}[\int\limits_{u_{m-2}}^{\infty }d\chi
_{m-1}\int\limits_{u_{m-2}}^{\infty }d\chi _{m}e^{-\frac{1}{2}\chi
_{m-1}^{2}-\frac{1}{2}\chi _{m}^{2}} \\
&&\times \left( \chi _{m-1}-u_{m-2}\right) ^{m-2}\left( \chi
_{m}-u_{m-2}\right) ^{m-1}\Delta \left( \chi _{2},\ldots ,\chi _{m}\right)
\nonumber \\
&&-\left( m-1\right) \int\limits_{u_{m-2}}^{\infty
}du_{m-1}\int\limits_{u_{m-1}}^{\infty }d\chi
_{m-1}\int\limits_{u_{m-1}}^{\infty }d\chi _{m}e^{-\frac{1}{2}\chi
_{m-1}^{2}-\frac{1}{2}\chi _{m}^{2}}  \nonumber \\
&&\times \left( \chi _{m-1}-u_{m-1}\right) ^{m-2}\left( \chi
_{m}-u_{m-1}\right) ^{m-2}\Delta \left( \chi _{1},\ldots ,\chi _{m}\right) ].
\nonumber
\end{eqnarray}%
The second term cancels because of the antisymmetry of the Vandermonde
determinant with respect to interchange of $\chi _{m}$ and $\chi _{m-1}$.
Iterating this procedure we remove $\left( m-1\right) $ integrals in the
variables $u_{2},\ldots ,u_{m}$ .
\begin{eqnarray}
&&\frac{1}{\left( m-1\right) !}\int\limits_{-\infty }^{\infty }d\chi
_{1}\int\limits_{\chi _{1}}^{\infty }d\chi _{2}\cdots \int\limits_{\chi
_{1}}^{\infty }d\chi _{m} \\
&&\times e^{-\chi _{1}^{2}}\prod\limits_{i=2}^{m}\left( \chi _{i}-\chi
_{1}\right) ^{i-1}e^{-\frac{1}{2}\chi _{i}^{2}}\Delta \left( \chi
_{1},\ldots ,\chi _{m}\right)  \nonumber
\end{eqnarray}%
A symmetrization of the expression under the integral in the variables
$\chi_{i}$, $i=2,\ldots,m$, yields another Vandermonde determinant. Thus
\begin{eqnarray}
J_{m}\left( 0\right) &=&\frac{\left( -1\right) ^{\frac{m(m-1)}{2}}}{\left(
\left( m-1\right) !\right) ^{2}}\int\limits_{-\infty }^{\infty }d\chi
_{1}\int\limits_{\chi _{1}}^{\infty }d\chi _{2} \\
&&\qquad \cdots \int\limits_{\chi _{1}}^{\infty }d\chi _{m}e^{-\frac{1}{2}%
\left( \chi _{1}^{2}+\cdots +\chi _{m}^{2}\right) }\left\vert \Delta \left(
\chi _{1},\ldots ,\chi _{m}\right) \right\vert ^{2}.  \nonumber
\end{eqnarray}%
Finally we add this integral to the $\left( m-1\right) $ similar ones,
obtained by interchanging $\chi _{1}$ with each of $\chi _{2},\ldots ,\chi
_{m}$, and divide the sum by $m$.
\begin{eqnarray}
J_{m}\left( 0\right) &=&\frac{\left( -1\right) ^{\frac{m(m-1)}{2}}}{\left(
m-1\right) !m!}\int\limits_{-\infty }^{\infty }d\chi _{1} \\
&&\cdots \int\limits_{-\infty }^{\infty }d\chi _{m}e^{-\frac{1}{2}\left(
\chi _{1}^{2}+\cdots +\chi _{m}^{2}\right) }\left\vert \Delta \left( \chi
_{1},\ldots ,\chi _{m}\right) \right\vert ^{2}  \nonumber
\end{eqnarray}%
This gives us the stated result.
\end{proof}

\begin{lemma}
\begin{equation}
\lim_{\alpha \rightarrow -\infty }e^{-\alpha ^{2}m(m-1)/2}J_{m}\left( \alpha
\right) =I_{m-1,1}\frac{\sqrt{2\pi }\left( -1\right) ^{\frac{m(m-1)}{2}}}{%
\left( \left( m-1\right) !\right) ^{2}}m^{m-1}.  \label{alha->-infty (lemma)}
\end{equation}
\end{lemma}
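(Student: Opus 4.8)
The plan is to extract the Gaussian factor $e^{m(m-1)\alpha^{2}/2}$ by a sequence of changes of variables adapted to where the integrand concentrates as $\beta:=-\alpha\to+\infty$, and then to reduce the surviving integral to the Mehta integral $I_{m-1,1}$ by an antisymmetry argument. First I would substitute $\xi_{i}=u_{i}+\nu_{i}$ for $i=2,\ldots,m$, so that the constraint $\nu_{i}\ge 0$ becomes $\xi_{i}\ge u_{i}$, and complete the square via $-\tfrac12(u_{i}+\nu_{i})^{2}-\alpha\nu_{i}=-\tfrac12(\xi_{i}-\beta)^{2}+\tfrac12\beta^{2}-\beta u_{i}$. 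This pulls out a factor $e^{(m-1)\beta^{2}/2}$, recenters each $\xi_{i}$-Gaussian at $\xi_{i}=\beta$, and leaves behind the linear drift $e^{-\beta\sum_{i=2}^{m}u_{i}}$ together with $\prod_{i}\nu_{i}^{i-2}=\prod_{i}(\xi_{i}-u_{i})^{i-2}$ and the Vandermonde $\Delta(u_{1},\xi_{2},\ldots,\xi_{m})$.

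Second, I would identify the two concentration scales. The factor $e^{-\beta\sum_{i\ge2}u_{i}}$ forces $u_{2},\ldots,u_{m}$ down onto $u_{1}$ at scale $1/\beta$; writing $u_{i}=u_{1}+\tau_{i}/\beta$ turns the ordered $u$-integral into $\beta^{-(m-1)}\int_{0<\tau_{2}<\cdots<\tau_{m}}e^{-\sum\tau_{i}}\,d\tau=\beta^{-(m-1)}/(m-1)!$. Simultaneously the combined $u_{1}$-weight $e^{-\frac12 u_{1}^{2}-(m-1)\beta u_{1}}$ concentrates $u_{1}$ near its saddle $u_{1}=-(m-1)\beta$, producing a second Gaussian factor $e^{(m-1)^{2}\beta^{2}/2}$ and an $O(1)$ Gaussian in $r:=u_{1}+(m-1)\beta$. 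The essential point, which I expect to be the main obstacle and the easiest thing to get wrong, is that this saddle sits at order $\beta$, so that the gap $W_{i}:=\xi_{i}-u_{i}=\nu_{i}$ equals $m\beta+s_{i}-r+O(1/\beta)$ rather than $\beta$ (here $\xi_{i}=\beta+s_{i}$). It is exactly this $m\beta$, not $\beta$, that will generate the factor $m^{m-1}$; a naive saddle placing $u_{1}$ at $O(1)$ would miss it (one can verify against the exact two-particle value $J_{2}=-2\pi e^{\alpha^{2}}\mathrm{Erfc}(\alpha)\to-4\pi e^{\alpha^{2}}$).

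Third, I would carry out the $s$-integration. After $\xi_{i}=\beta+s_{i}$ the constraints $\xi_{i}\ge u_{i}$ become $s_{i}\ge u_{i}-\beta\to-\infty$, so the $s_{i}$ run freely over $\mathbb{R}$ against the symmetric weight $\prod e^{-s_{i}^{2}/2}$. The factor $\prod_{i}(\xi_{i}-u_{i})^{i-2}=\prod_{i}W_{i}^{i-2}$ combines with the Vandermonde block $\prod_{j=2}^{m}(u_{1}-\xi_{j})=(-1)^{m-1}\prod_{j}W_{j}$ into $(-1)^{m-1}\prod_{i}W_{i}^{i-1}$, while the remaining block $\prod_{2\le i<j\le m}(\xi_{i}-\xi_{j})=\Delta(s_{2},\ldots,s_{m})$. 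Since this last factor is antisymmetric and the measure symmetric, only the antisymmetric part of $\prod_{i}W_{i}^{i-1}$ survives; as the exponents $\{1,\ldots,m-1\}$ are distinct, antisymmetrization gives $\tfrac{1}{(m-1)!}\det(W_{i}^{j-1})=\tfrac{1}{(m-1)!}\prod_{i}W_{i}\prod_{i<j}(s_{j}-s_{i})$, and with $\prod_{i}W_{i}\approx(m\beta)^{m-1}$ the $s$-integral becomes $\tfrac{(m\beta)^{m-1}}{(m-1)!}\int_{\mathbb{R}^{m-1}}e^{-\frac12\sum s_{i}^{2}}\prod_{i<j}(s_{j}-s_{i})\,\Delta(s)\,ds=(-1)^{(m-1)(m-2)/2}\tfrac{(m\beta)^{m-1}}{(m-1)!}I_{m-1,1}$, using $\prod_{i<j}(s_{j}-s_{i})\,\Delta(s)=(-1)^{(m-1)(m-2)/2}\Delta(s)^{2}$ and the definition of $I_{m-1,1}$.

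Finally I would assemble the pieces: the two Gaussian factors combine to $e^{[(m-1)+(m-1)^{2}]\beta^{2}/2}=e^{m(m-1)\beta^{2}/2}=e^{m(m-1)\alpha^{2}/2}$; the $\beta^{m-1}$ from $\prod_{i}W_{i}$ cancels the $\beta^{-(m-1)}$ Jacobian of the $\tau$-rescaling; the $r$-Gaussian contributes $\sqrt{2\pi}$; and the signs collect to $(-1)^{m-1}(-1)^{(m-1)(m-2)/2}=(-1)^{m(m-1)/2}$, giving exactly $\tfrac{\sqrt{2\pi}\,(-1)^{m(m-1)/2}}{((m-1)!)^{2}}m^{m-1}I_{m-1,1}$. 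To make this rigorous rather than heuristic I would, as in the proof of the first lemma, bound the integrand uniformly in $\beta$ by replacing every Vandermonde by its absolute value and extending the truncated limits to infinity, so that dominated convergence justifies both the clustering substitution and the interchange of $\lim_{\beta\to\infty}$ with integration; the subleading $O(1/\beta)$ corrections in $W_{i}$ and in $(\xi_{i}-u_{i})$ then drop out after dividing by $e^{m(m-1)\alpha^{2}/2}$.
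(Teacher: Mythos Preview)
Your proposal is correct and follows essentially the same route as the paper's own proof. The paper performs in one stroke the change of variables $s_{1}=u_{1}-\alpha(m-1)$, $s_{i}=|\alpha|(u_{i}-u_{1})$ for $i\ge2$, and $x_{i}=\nu_{i}+\alpha+u_{1}$, which amounts precisely to your three steps (complete the square in $\xi_{i}=u_{i}+\nu_{i}$, rescale $u_{i}-u_{1}$ by $\beta$, shift $u_{1}$ to its saddle $-(m-1)\beta$); the paper's $x_{i}$ differs from your $s_{i}$ only by the $O(1/\beta)$ term $\tau_{i}/\beta$, and both arguments finish by the same antisymmetrization producing $I_{m-1,1}$ and extracting $(\alpha m)^{m-1}$ from the first slot of the Vandermonde. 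Your emphasis that it is $m\beta$, not $\beta$, that appears in $W_{i}$ is exactly the mechanism the paper uses when writing $\Delta(s_{1}+\alpha m,x_{2},\ldots)\simeq(\alpha m)^{m-1}\Delta(x_{2},\ldots,x_{m})$.
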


\begin{proof}
We start from the integral in (\ref{f_m(alpha)}) and make a change of
variables as follows,
\begin{eqnarray}
x_{i} &=&\nu _{i}+\alpha +u_{1},\quad i=1,\ldots ,m-1 \\
s_{i} &=&\left\vert \alpha \right\vert \left( u_{i}-u_{1}\right) ,\quad
i=1,\ldots ,m-1 \\
s_{1} &=&u_{1}-\alpha \left( m-1\right) ,
\end{eqnarray}%
which yields the following integral expression,
\begin{eqnarray}
&&J_{m}\left( \alpha \right) =\frac{e^{\alpha ^{2}\frac{m(m-1)}{2}}}{%
\left\vert \alpha \right\vert ^{m-1}}\int\limits_{-\infty }^{\infty
}ds_{1}e^{-\frac{1}{2}s_{1}^{2}}\int\limits_{0}^{\infty
}ds_{2}e^{-s_{2}}\int\limits_{s_{2}}^{\infty }ds_{3}e^{-s_{3}}\cdots \\
&&\times \int\limits_{s_{m-1}}^{\infty
}ds_{m}e^{-s_{m}}\int\limits_{s_{1}+\alpha m}^{\infty }dx_{2}\cdots
\int\limits_{s_{1}+\alpha m}^{\infty }dx_{m}\prod\limits_{i=2}^{m}\left(
x_{i}-s_{1}-\alpha m\right) ^{i-2}  \nonumber \\
&&\times \prod_{i=2}^{m}e^{-\frac{1}{2}\left[ x_{i}^{2}+\frac{1}{\left\vert
\alpha \right\vert }\left( 2s_{i}x_{i}+\frac{s_{i}^{2}}{|\alpha |}\right) %
\right] }\Delta \left( s_{1}+\alpha m,x_{2}+\frac{s_{2}}{\left\vert \alpha
\right\vert }\ldots ,x_{m}+\frac{s_{m}}{\left\vert \alpha \right\vert }%
\right) .  \nonumber
\end{eqnarray}%
Due to the presence of the Gaussian and exponential terms, the main
contribution to the integral comes from finite values of $s_{1},\ldots
,s_{m} $ and $x_{2},\ldots ,x_{m}$. Therefore, up to corrections of order of
$O(1/\left\vert \alpha \right\vert )$, we can neglect the terms divided by $%
\left\vert \alpha \right\vert $, and extend the lower limits of integration
over $x_{2},\ldots ,x_{m}$ to $-\infty $. The the integrals over $%
s_{2},\ldots ,s_{m}$ decouple, and we can evaluate them to $1/\left(
m-1\right) !$. The Vandermonde determinant becomes antisymmetric with
respect to permutations of the variables $x_{2},\ldots ,x_{m}$. As the
integration is over the symmetric domain, we can leave only the
antisymmetric part of the rest of the expression. The product $%
\prod_{i=2}^{m}\left( x_{i}-s_{1}-\frac{\alpha m}{2}\right) ^{i-2}$ then
results in
\begin{equation}
\left( -1\right) ^{\frac{\left( m-1\right) \left( m-2\right) }{2}}\frac{%
\Delta \left( x_{2}\ldots ,x_{m}\right) }{(m-1)!}.
\end{equation}%
After collecting the leading order terms from the first argument of
\begin{equation}
\Delta \left( s_{1}+\alpha m,x_{2}\ldots ,x_{m}\right) \simeq \left( \alpha
m\right) ^{m-1}\Delta \left( x_{2}\ldots ,x_{m}\right) ,
\end{equation}%
the integral over $s_{1}$ decouples as well, and yields $\sqrt{\pi }$. We
finally obtain%
\begin{eqnarray}
&&J_{m}\left( \alpha \right) \simeq \frac{e^{\frac{\alpha ^{2}m(m-1)}{2}}%
\sqrt{2\pi }\left( -1\right) ^{\frac{\left( m-1\right) (m-2)}{2}}}{%
\left\vert \alpha \right\vert ^{m-1}\left( \left( m-1\right) !\right) ^{2}}%
\left( \alpha m\right) ^{m-1}  \nonumber \\
&&\qquad \times \int\limits_{-\infty }^{\infty }dx_{2}\cdots
\int\limits_{-\infty }^{\infty }dx_{m}e^{-\frac{1}{2}\sum%
\limits_{i=2}^{m}x_{i}^{2}}\left\vert \Delta \left( x_{2},\ldots
,x_{m}\right) \right\vert ^{2}.
\end{eqnarray}%
Using the definition of the Mehta integral, (\ref{Mehta}) we obtain (\ref%
{alha->-infty (lemma)}).
\end{proof}

The above lemmas, the formula for the Mehta integral (\ref{Mehta}) and the
definition (\ref{f_m(alpha) via J_m}) of $f_{m}\left( \alpha \right) $
establish the results (\ref{alpha->+infty})-(\ref{alpha->-infty}).

\section{Discussion of the results and conclusion}

The first result obtained in this paper is an expression for the survival
probability for $m$ walkers in the SVW model. At the fixed parameter $\kappa
<1$ and $t\rightarrow \infty $ the leading asymptotics $t^{-m(m-1)/4}$
coincides with that for the usual VW \cite{Fisher}. This result is
intuitively clear. In the case of the VW it is obtained by considering the
evolution of independent noninteracting particles and reducing the number of
its outcomes by dropping those realizations where the crossings of particle
time space trajectories occur. Then, the asymptotics of the survival
probability for two walkers $t^{-1/2}$ follows directly from the diffusion
law and the method of images. For $m$ walkers, the method of images involves
$m(m-1)/2$ reflecting wall planes. Each wall brings with it a factor $%
t^{-1/2}$ providing total contribution $t^{-m(m-1)/4}$. For any fixed $%
\kappa <1$ the events of crossing occur with a finite, though changed,
probability, so that this argument still holds. The survival probability
changes by a constant factor dependent on $\kappa $, leaving Fisher's law
unchanged. As $\kappa $ approaches one this factor diverges, which indicates
a qualitative change of behaviour of the survival probability. As the
crossings become less probable, it finally saturates to the TASEP
normalization constant. The transition between the two regions takes place
on the scale $(1-\kappa )\sqrt{t}=const$, where the effect of the diffusive
spreading of particles becomes comparable to the one caused by SVW
interaction. In this case, the survival probability is expressed by the
scaling function $f_{m}(\alpha )$, which has the SVW and TASEP asymptotics
as limiting cases.

The formulas for the survival probability in SVW can be reinterpreted in
terms of the moment generating function of the time integrated particle
current $Y_{t}$ in TASEP, which, in turn, can be used to construct the
distribution of the same quantity. The cases of generic $\kappa <1$ and $%
\kappa >1$ correspond to the tails of the probability distribution of $Y_{t}$
at the large deviation scale, characterizing positive, $(Y_{t}-\langle
Y_{t}\rangle )/t>0$, and negative, $(Y_{t}-\langle Y_{t}\rangle )/t<0$,
deviations respectively. The positive tail has the form specific for the
current distribution for $m$ free independent Bernoulli particles, while the
form of the negative tail looks like that of the distribution for one
Bernoulli particle that makes $m$ steps at a time or $m$ particles jumping
one step synchronously. Such asymmetry reveals different mechanisms of
positive and negative fluctuations. For positive deviations $m$ particles
have to be "accelerated" independently. The main contribution to the
negative deviations comes from the events when the first particle
decelerates all particles following behind. Our results are to be compared
to the ones obtained by Derrida and Lebowitz \cite{Derrida Lebowitz} (see
also \cite{Derrida Appert}) for the large deviation function of the TASEP
current on a ring. In particular, they have shown that for large $m$ the
non-universal tails of the current probability have the form $%
P(Y_{t}/t=v)\sim e^{mH_{+}(v/m)t}$ and $P(Y_{t}/t=v)\sim e^{H_{-}(v/m)t}$
for positive and negative deviations respectively. Our results have the same
functional form even for finite $m$ with $H_{-}(v)=H_{+}(v)$ being a simple
large deviation function of the Bernoulli process. Remarkably, such a
mechanism survives on the infinite lattice, despite the particles drifting
apart from each other in the course of time, so that they meet less and less
often. The acceleration-deceleration asymmetry was also observed in the
large deviations of the distance travelled by individual particles in TASEP
studied in \cite{Harris Rakos Schuetz} for a general case of particle
dependent hopping rates. There the negative large deviations do not depend
on the order number of a particle whereas the positive ones do.

The result obtained for the SVW in the transition region $(1-\kappa )\sqrt{t}%
=const$ provides us with the limiting distribution of the particle current
measured at the diffusive scale, $|Y_{t}-\langle Y_{t}\rangle |\sim \sqrt{t}$%
. The distribution obtained is parameter free, dependent only on the number
of particles. We expect that it is a universal distribution for the systems
of particles performing a driven diffusion, independent of the details of
microscopic dynamics. The distribution has a skew, non-Gaussian form, with
tails matching the large deviation behaviour asymptotically.

Several directions for future work can be mentioned. First direct
continuation of the present paper is an asymptotic study of the function $%
f_{m}\left( \alpha \right) $ as $m\rightarrow \infty $. Considering a
certain common scaling with the variable $\alpha $ is expected to give a new
universal scaling function characterizing the KPZ class. Note that in the
limiting cases this function reduces to Mehta integrals, which are the
probability normalization constants for Gaussian random matrix ensembles,
namely the Unitary and Orthogonal ensembles. The asymptotic evaluation of
these integrals performed in the random matrix theory resulted in
non-trivial densities of critical points, from which comes the leading
contribution to the integrals \cite{Mehta book}. It would be interesting to
see how these densities transform from one to another as the parameters vary
between the limiting cases corresponding to different ensembles. It would
also be interesting to study the TASEP confined in a ring. The starting
point of this analysis could be the recently obtained expressions for the
Green functions \cite{Priezzhev ring}. In this way one could obtain a
scaling function that characterizes the behaviour of KPZ interfaces in
finite systems.

Another possible development of the present article is a generalization of
the above mentioned results for the probability distributions of the
distance travelled by an individual particle in the TASEP and the
corresponding correlation functions to SVW case. Note that similar results
exist also for the VW \cite{Baik}. In both cases the appropriately rescaled
distribution of the distance travelled by a single particle starting from a
half filled lattice is shown to converge, to the so-called Tracy-Widom
distributions \cite{Tracy Widom}, which appear in the random matrix theory
as a distribution of maximal eigenvalue in the Gaussian ensembles, unitary
in case of TASEP and orthogonal for VW. The SVW model establishes a bridge
between these two cases. However, its Green function has neither a T\"{o}%
plitz form like in VW nor a special structure like in TASEP, which allowed
Sasamoto, \cite{Sasamoto}, to reinterpret it again as a problem of the VW
and finally to represent its evolution as a determinantal point process.
Therefore, a significant extension of the existing techniques is in order.
In this connection we should mention the recent advance for the Partially
Asymmetric Simple Exclusion Process \cite{Tracy Widom PASEP1}-\cite{Tracy
Widom PASEP3}, which was made only on the basis of the Bethe Ansatz solution
without use any free fermionic representation like VW.

An interesting example of SVW has been proposed recently by Johansson \cite%
{Johansson Aztec} in his analysis of a domino tilling problem on the Aztec
diamond known as the arctic circle problem. It was shown that the domino
configurations are in one-to-one correspondence with trajectories of an $n$%
-particle process which is defined as follows. At each discrete time step a
particle jumps forward with probability $q$ or stays put with probability $%
p=1-q$. If the next site is occupied, the probability to stay put is $%
1-q(1-\kappa )$ as in the SVW model. In addition, after each step, a
particle $i$ can be translated back to the distance $s_{i}$ with probability
$q^{s_{i}}$ provided that $s_{i}<X_{i}-X_{i-1}$ for all $i$. If one chooses $%
\kappa =-q$, the model belongs to the free fermion class and its transition
probabilities admit a determinant representation. It has been shown in \cite%
{Johansson Aztec} that the position of the first particle is described by
the Airy process in the thermodynamic limit for the domain wall boundary
conditions in the domino tiling problem. By similarity of the models, one
may expect that the extremal statistics of the SVW model also exhibits a
kind of Tracy-Widom distribution for appropriate initial conditions.

\section{Acknowledgements}

The authors are grateful to Gunter Sch\"utz for stimulating discussion. The
work is partially supported by the RFBR grant No. 06-01-00191a.

\end{document}